\documentclass[conference,letterpaper,romanappendices]{ieeeconf}
\let\proof\relax   

\usepackage{amsthm,xpatch}
\usepackage{amsmath}
\usepackage{cite}
\usepackage{amssymb}
\usepackage{amsfonts}
\usepackage{dsfont}
\usepackage{graphicx, subfigure}
\usepackage{color}
\usepackage{breqn}
\usepackage{mathtools}
\usepackage{bbm}
\usepackage{latexsym}
\usepackage[ruled, linesnumbered]{algorithm2e}
\usepackage{accents}

\newtheorem{lemma}{Lemma}
\newtheorem{theorem}{Theorem}

\newtheorem{definition}{Definition}

\IEEEoverridecommandlockouts

\begin{document}

\newcommand{\SB}[3]{
\sum_{#2 \in #1}\biggl|\overline{X}_{#2}\biggr| #3
\biggl|\bigcap_{#2 \notin #1}\overline{X}_{#2}\biggr|
}

\newcommand{\Mod}[1]{\ (\textup{mod}\ #1)}

\newcommand{\overbar}[1]{\mkern 0mu\overline{\mkern-0mu#1\mkern-8.5mu}\mkern 6mu}

\makeatletter
\newcommand*\nss[3]{%
  \begingroup
  \setbox0\hbox{$\m@th\scriptstyle\cramped{#2}$}%
  \setbox2\hbox{$\m@th\scriptstyle#3$}%
  \dimen@=\fontdimen8\textfont3
  \multiply\dimen@ by 4             
  \advance \dimen@ by \ht0
  \advance \dimen@ by -\fontdimen17\textfont2
  \@tempdima=\fontdimen5\textfont2  
  \multiply\@tempdima by 4
  \divide  \@tempdima by 5          
  \ifdim\dimen@<\@tempdima
    \ht0=0pt                        
    \@tempdima=\fontdimen5\textfont2
    \divide\@tempdima by 4          
    \advance \dimen@ by -\@tempdima 
    \ifdim\dimen@>0pt
      \@tempdima=\dp2
      \advance\@tempdima by \dimen@
      \dp2=\@tempdima
    \fi
  \fi
  #1_{\box0}^{\box2}%
  \endgroup
  }
\makeatother

\makeatletter
\renewenvironment{proof}[1][\proofname]{\par
  \pushQED{\qed}%
  \normalfont \topsep6\p@\@plus6\p@\relax
  \trivlist
  \item[\hskip\labelsep
        \itshape
    #1\@addpunct{:}]\ignorespaces
}{%
  \popQED\endtrivlist\@endpefalse
}
\makeatother

\makeatletter
\newsavebox\myboxA
\newsavebox\myboxB
\newlength\mylenA

\newcommand*\xoverline[2][0.75]{%
    \sbox{\myboxA}{$\m@th#2$}%
    \setbox\myboxB\null
    \ht\myboxB=\ht\myboxA%
    \dp\myboxB=\dp\myboxA%
    \wd\myboxB=#1\wd\myboxA
    \sbox\myboxB{$\m@th\overline{\copy\myboxB}$}
    \setlength\mylenA{\the\wd\myboxA}
    \addtolength\mylenA{-\the\wd\myboxB}%
    \ifdim\wd\myboxB<\wd\myboxA%
       \rlap{\hskip 0.5\mylenA\usebox\myboxB}{\usebox\myboxA}%
    \else
        \hskip -0.5\mylenA\rlap{\usebox\myboxA}{\hskip 0.5\mylenA\usebox\myboxB}%
    \fi}
\makeatother

\xpatchcmd{\proof}{\hskip\labelsep}{\hskip3.75\labelsep}{}{}

\pagestyle{plain}

\title{\fontsize{22.59}{28}\selectfont A Monetary Mechanism for Stabilizing Cooperative Data Exchange with Selfish Users}

\author{Anoosheh Heidarzadeh, Ishan Tyagi, Srinivas Shakkottai, and Alex Sprintson\thanks{The authors are with the Department of Electrical and Computer Engineering, Texas A\&M University, College Station, TX 77843 USA (E-mail: \{anoosheh, ishantyagi1992, sshakkot, spalex\}@tamu.edu).}}


\maketitle 

\thispagestyle{plain}

\begin{abstract}
This paper considers the problem of stabilizing cooperative data exchange with selfish users. In this setting, each user has a subset of packets in the ground set $X$, and wants all other packets in $X$. The users can exchange their packets by broadcasting coded or uncoded packets over a lossless broadcast channel, and monetary transactions are allowed between any pair of users. We define the utility of each user as the sum of two sub-utility functions: (i) the difference between the total payment received by the user and the total transmission rate of the user, and (ii) the difference between the total number of required packets by the user and the total payment made by the user. A rate-vector and payment-matrix pair $(r,p)$ is said to stabilize the grand coalition (i.e., the set of all users) if $(r,p)$ is Pareto optimal over all minor coalitions (i.e., all proper subsets of users who collectively know all packets in $X$). Our goal is to design a stabilizing rate-payment pair with minimum total sum-rate and minimum total sum-payment for any given instance of the problem. In this work, we propose two algorithms that find such a solution. Moreover, we show that both algorithms maximize the sum of utility of all users (over all solutions), and one of the algorithms also maximizes the minimum utility among all users (over all solutions).   
\end{abstract}

\section{introduction}
Over the last decade, several variants of the cooperative data exchange (CDE) problem have been studied in the literature, see, e.g.,~\cite{RCS:2007,RSS:2010,CW:2010,SSBR:2010,SSBR2:2010,GL:2012,YSZ:2014,CW:2014,CH:2014, MPRGR:2016, CAZDLS:2016}. The original setting of this problem considers a peer-to-peer data exchange scenario over a lossless broadcast channel. There is a group $N$ of users and a ground set $X$ of packets. Each user knows a subset of packets in $X$, and wants to learn the rest of packets in $X$. The users exchange their packets by broadcasting coded or uncoded versions of their packets, and the problem is to find a solution (i.e., the transmission rate of each user and the set of packets transmitted by each user) such that all users achieve omniscience with minimum total sum-rate. 

In this work, we revisit the CDE problem from a game-theoretic perspective where all users are selfish. In this setting, there can be a monetary transaction between any pair of users, and the utility function of each user is defined as the sum of two sub-utility functions as follows: (i) the difference between the total payment the user receives from other users and its transmission rate, and (ii) the difference between the total number of packets the user wants and the total payment it makes to other users. Thinking of the sum of the transmission rate and the total payment being made by each user as its \emph{cost} for participating in the exchange session, and thinking of the sum of the number of packets each user learns and the total payment being received by the user as its \emph{gain} due to its participation in the exchange session, the utility function of each user is the surplus of the user. 

The problem is to find a rate schedule $\{r_i\}_{i\in N}$ and a payment schedule $\{p_{i,j}\}_{i,j\in N}$ for the \emph{grand coalition} (i.e., the set of all users) to achieve omniscience all together that is Pareto optimal, with respect to the utility function, over all \emph{minor coalitions} (i.e., any proper subset of users who collectively know all packets in $X$). That is, a pair $(\{r_i\}_{i\in N},\{p_{i,j}\}_{i,j\in N})$ is a \emph{solution} if there is no pair $(\{\tilde{r}_i\}_{i\in S},\{\tilde{p}_{i,j}\}_{i,j\in S})$ for any minor coalition $S$ to achieve omniscience together such that the utility of some user(s) in $S$ is strictly greater, and the utility of no user in $S$ is less. Note that a solution \emph{stabilizes} the grand coalition in that no minor coalition has incentive to break the grand coalition. The goal is to find a solution that minimizes the total sum-rate and the total sum-payment simultaneously. 

In this work, we propose two algorithms, each of which finds a solution for any problem instance. Moreover, we show that both algorithms maximize the sum of utility of all users (over all solutions), and one of the algorithms also maximizes the minimum utility among all users (over all solutions).  

\subsection{Related Work}
A different coalition-game model for the CDE problem was recently proposed in~\cite{DCLKS:16}. This model, however, differs from our work in two aspects: (i) the utility function under the consideration is different from ours, and (ii) the criteria for the stability of the grand coalition is different from the Pareto optimality being considered here. 

Very recently, in~\cite{HS3:2016}, we also studied a related problem, where each user has two utility functions: its rate and its delay. Defining the stability of the grand coalition via the Pareto optimality, with respect to both the rate and delay functions simultaneously, over all minor coalitions, we showed that there does not exist any \emph{non-monetary} mechanism (without the peer-to-peer payments) that stabilizes the grand coalition for all problem instances. This result is the motivation of this work on the design of a \emph{monetary} mechanism for stabilizing the grand coalition for any problem instance.

\section{Problem Setup}
We consider the original setting of the cooperative data exchange (CDE) problem as follows. Consider a group of $n$ users and a set of $k$ packets ${X}\triangleq \{x_1,\dots,x_{k}\}$. Let $N\triangleq\{1,\dots,n\}$ and $K\triangleq\{1,\dots,k\}$. Initially, each user $i\in N$ has a subset ${X}_i$ of the packets in ${X}$, and ultimately, the user $i$ wants the rest of the packets $\overline{{X}}_i\triangleq {X}\setminus {X}_i$. The index set of packets in $X_i$ for each user $i$ is known by all other users. Also, without loss of generality, we assume that ${X}=\cup_{i\in N} {X}_i$. The objective of all users is to \emph{achieve omniscience}, i.e., to learn all packets in ${X}$, via exchanging their packets by broadcasting (coded or uncoded) packets. 


A subset $S$ of users in $N$ is a \emph{coalition} if $\cup_{i\in S} X_i=X$. We refer to any coalition $S\subset N$ as a \emph{minor coalition}, and refer to the coalition $N$ as the \emph{grand coalition}. Whenever we use the notation $S$ for a subset of users, we assume that $S$ is a coalition, unless explicitly noted otherwise. 



Let $\mathbb{Z}_{+}$ be the set of non-negative integers. For any $S\subseteq N$, a rate vector $r\triangleq [r_1,\dots,r_n]\in \mathbb{Z}_{+}^{n}$ is \emph{$S$-omniscience-achieving} if there exists a transmission scheme with each user $i\in S$ transmitting $r_{i}$ (coded or uncoded) packets such that all users in $S$ achieve omniscience, regardless of transmissions of the rest of the users. Note that, for any $S$-omniscience-achieving rate vector, random linear network coding (over a sufficiently large finite field) suffices as a transmission scheme for all users in $S$ to achieve omniscience (with any arbitrarily high probability)~\cite{CW:2014}. 

For any $S\subseteq N$, we denote by $\mathcal{R}_{S}$ the set of all $S$-omniscience-achieving rate vectors $r$ such that $r_i=0$ for all $i\not\in S$. For any arbitrary subset $S\subseteq N$ and any rate vector $r$, we define the sum-rate $r_S\triangleq \sum_{i\in S} r_i$ and $r_{\emptyset}\triangleq 0$. By a standard network coding argument~\cite{CW:2014}, for any $S\subseteq N$, $r\in\mathcal{R}_S$ iff $r_{\tilde{S}}\geq |\cap_{j\in S\setminus \tilde{S}} \overline{{X}}_j|$, for every (non-empty) $\tilde{S}\subset S$. 


We consider CDE under a monetary mechanism where there can be a payment from any user to any other user. For all $i,j\in N$, let ${p_{i,j}\geq 0}$ be the total payment from the user $i$ to the user $j$, and let $p_{i,i} = 0$. For a payment matrix $p\triangleq [p_{i,j}]$, let $p^{+}_i\triangleq \sum_{j\in N\setminus \{i\}} p_{j,i}$ and $p^{-}_i\triangleq \sum_{j\in N\setminus \{i\}} p_{i,j}$ be the total incoming payment of the user $i$ and the total outgoing payment of the user $i$, respectively. 

For any $S\subseteq N$, we denote by $\mathcal{P}_S$ the set of all payment matrices $p$ such that $p_{i,j}=0$ and $p_{j,i}=0$ for all $i\in S$, $j\not\in S$, i.e., there is no incoming payment to any user in $S$ from any user out of $S$ and there is no outgoing payment from any user in $S$ to any user out of $S$. For any $S\subseteq N$, we define the sum-payment $p_S\triangleq \sum_{i,j\in S} p_{i,j}$. Note that $\sum_{i\in S} p^{+}_i = \sum_{i\in S} p^{-}_i=p_S$ for all $p\in \mathcal{P}_S$.   
 
 

\begin{definition}[Utility]
For any $S\subseteq N$, any $r\in \mathcal{R}_S$, and any $p\in\mathcal{P}_S$, the utility of each user $i\in S$ is given by \[u_i(r,p)\triangleq (p^{+}_i-r_i)+(|\overline{X}_i|-p^{-}_i),\] where $u_i^{+}(r,p)\triangleq p^{+}_i-r_i$ is the net utility due to the user $i$'s contribution to the system, and $u_i^{-}(r,p)\triangleq |\overline{X}_i|-p^{-}_i$ is the net utility due to the system's contribution to the user $i$.
\end{definition}

Note that the cost per transmission and the value per packet are assumed to be unity for all users.

The two functions $u_i^{+}(r,p)$ and $u_i^{-}(r,p)$ motivate the notion of rationality defined as follows. 

\begin{definition}[Rationality] For any $S\subseteq N$, any $r\in \mathcal{R}_S$ and any $p\in \mathcal{P}_S$, the rate-payment pair $(r,p)$ is \emph{rational} if $u_i^{+}(r,p)\geq 0$ and $u_i^{-}(r,p)\geq 0$ for all $i\in S$.
\end{definition}


Hereafter, we focus on the rational rate-payment pairs only, and omit the term ``rational'' for brevity. 

We assume that all the users are \emph{selfish}, i.e., each user may or may not agree with its rate specified by a rate vector or its payments specified by a payment matrix. The goal is to find a rate-payment pair $(r,p)$, $r\in\mathcal{R}_{N}$ and $p\in \mathcal{P}_N$, under which $N$ is \emph{stable}. We formally define the notion of stability based on the utility function as follows. 



\begin{definition}[Stability]
For any rate-payment pair $(r,p)$, $r\in\mathcal{R}_N$ and $p\in\mathcal{P}_N$, $N$ is \emph{$(r,p)$-stable} if there is not a rate-payment pair $(\tilde{r},\tilde{p})$, $\tilde{r}\in\mathcal{R}_S$, and $\tilde{p}\in\mathcal{P}_S$, for some $S\subset N$, such that 
\begin{itemize}
\item $u_i(r,p)\leq u_i(\tilde{r},\tilde{p})$ for all $i\in S$, and
\item $u_i(r,p)< u_i(\tilde{r},\tilde{p})$ for some $i\in S$.
\end{itemize} 
\end{definition}

The $(r,p)$-stability of the grand coalition is equivalent to the Pareto optimality of $(r,p)$ over all minor coalitions.

\begin{definition}[Feasibility]
A rate-payment pair $(r,p)$ is \emph{feasible} if $N$ is $(r,p)$-stable.
\end{definition}


Note that a feasible solution guarantees that no minor coalition of users has incentive to break the grand coalition.



\begin{definition}[Optimality]
A feasible $(r,p)$ is \emph{optimal} if there is not a feasible $(\tilde{r},\tilde{p})$ such that $r_N>\tilde{r}_N$ or $p_N>\tilde{p}_N$.
\end{definition}

Note that, for an optimal solution, the sum-rate and the sum-payment are minimum among all feasible solutions.

The problem is to determine if an optimal solution exists for any given instance, and if so, to find such a solution.

{\SetAlgoNoLine
\begin{algorithm}[t!]
\caption{Algo1($n$, $k$, $\{\mathrm{U}_i\}_{i=1}^{n}$, $\mathbb{F}_q$)}
 $N\leftarrow \{1,\dots,n\}$, $K\leftarrow\{1,\dots,k\}$\\
 ${r}_i\leftarrow 0$ $\forall i\in N$, $p_{i,j}\leftarrow 0$ $\forall i,j\in N$\\
 $l\leftarrow 1$, $\mathrm{V}_0\leftarrow \emptyset$\\
 \While{$\dim(\mathrm{U}_i\cup \mathrm{V}_{l-1})< k$ \emph{for some} $i\in N$}{
  $T_l\leftarrow \{i\in N: \dim(\mathrm{U}_i\cup \mathrm{V}_{l-1})=\max_{i\in N} \dim(\mathrm{U}_i\cup \mathrm{V}_{l-1})\}$\\
  Select an arbitrary user $t\in T_l$\\ 
   $R_l\leftarrow \{i\in N: \mathrm{U}_{t}\not\subseteq \mathrm{span}(\mathrm{U}_i\cup \mathrm{V}_{l-1})\}$\\
   Select an encoding vector $v_{l}\in\mathbb{F}_q^{k}$ such that $v^{i}_{l}=0$ $\forall \{i\in K: u_i\not\in \mathrm{U}_{t}\}$ and $v_{l}\not\in \mathrm{span} (\cup_{i\in R_l}\mathrm{U}_i\cup\mathrm{V}_{l-1})$\\
   Have the user $t$ transmit the packet $y_{l}=\sum_{i\in K} v_{l}^{i} x_i$\\
   ${r}_{t}\leftarrow {r}_{t}+1$\\
   $p_{i,t}\leftarrow p_{i,t}+1/|R_l|$ $\forall i\in R_l$\\
   $\mathrm{V}_{l}\leftarrow \mathrm{V}_{l-1}\cup v_{l}$\\
$l\leftarrow l+1$
 }
 \Return ${r} = [{r}_i]_{i\in N}$ and $p = [p_{i,j}]_{i,j\in N}$
\end{algorithm}
}

\section{Proposed Algorithms}
\subsection{Algorithm~1}
In this section, we present an algorithm that, for any given instance, finds an optimal solution. 

The algorithm begins with an all-zero rate vector $r = [r_i]_{i\in N}$ and an all-zero payment matrix $p = [p_{i,j}]_{i,j\in N}$, operates in rounds, and updates $r$ and $p$ over the rounds. 

For any (uncoded) packet $x_i$, $i\in K$, denote the (unit) encoding vector of $x_i$ by $u_{i}\triangleq [u^{1}_{i},\dots,u^{k}_{i}]$, where $u^{i}_{i}=1$ and $u^{j}_{i}=0$ for all $j\neq i$. For any (linearly coded) packet $y_j \triangleq \sum_{i\in K} v^{i}_{j} x_i$, where $v^{i}_{j}\in\mathbb{F}_q$ (for some finite field $\mathbb{F}_q$), denote the encoding vector of $y_j$ by $v_{j} \triangleq [v_{j}^{1},\dots,v_{j}^{k}]$.

Let $\mathrm{U}_i$ be the set of (unit) encoding vectors of packets in $X_i$, and $\mathrm{V}_l$ be the set of encoding vectors of all packets being transmitted by the end of the round $l$. Let $\mathrm{V}_{0}\triangleq\emptyset$. We refer to $\mathrm{span}(\mathrm{U}_i\cup \mathrm{V}_l)$ and $\dim(\mathrm{U}_i\cup \mathrm{V}_l)$ as the \emph{knowledge} and the \emph{size of knowledge} of the user $i$ at the end of the round $l$, respectively, where $\mathrm{span}(\mathrm{V})$ and $\dim(\mathrm{V})$ denote the vector space of (linear) span (over $\mathbb{F}_q$) of a collection $\mathrm{V}$ of vectors in $\mathbb{F}_q^{k}$ and the dimension of $\mathrm{span}(\mathrm{V})$, respectively.  
 
Consider an arbitrary round $l>0$. Let $T_l$ be the set of all users $i$ with maximum $\dim(\mathrm{U}_i \cup \mathrm{V}_{l-1})$. In the round $l$, the algorithm first selects an arbitrary user $t\in T_l$, and then the user $t$ constructs (using its uncoded packets) and broadcasts a (coded) packet $y_{l}$ (with encoding vector $v_l$). 

Let $R_l$ be the set of all users $i$ such that $\mathrm{U}_{t}\not\subseteq\mathrm{span}(\mathrm{U}_i\cup\mathrm{V}_{l-1})$. The encoding vector $v_{l}$ of the packet $y_{l}$ satisfies two conditions: (i) $v_{l}^{i}=0$ $\forall \{i\in K: u_{i}\not\in \mathrm{U}_{t}\}$, and (ii) $v_{l}\not\in \mathrm{span}(\cup_{i\in R_l}\mathrm{U}_i\cup \mathrm{V}_{l-1})$. (Such a vector $v_{l}\in\mathbb{F}_q^k$ always exists and it can be found in polynomial time using a randomized or a deterministic algorithm so long as $q\geq n \cdot k$ or $q\geq n$, respectively~\cite{SSBR:2010}.) Note that $R_l$ is the set of all users $i$ whose knowledge at the beginning of the round $l$ is not a superset of (initial) knowledge of the transmitting user $t$, and the encoding vector $v_l$ of the packet $y_l$ being transmitted by the user $t$ in the round $l$ is not known to any user $i\in R_l$ at the beginning of the round $l$. Thus, the transmission of the packet $y_l$ increases the size of knowledge of any user $i\in R_l$ by one, and it does not change that of any user $i\not\in R_l$.

Next, the algorithm increments ${r}_{t}$ by $1$ and increments $p_{i,t}$ by $1/|R_l|$ for all $i\in R_l$. At the end of the round $l$, the algorithm augments $\mathrm{V}_{l-1}$ by $v_{l}$, and constructs $\mathrm{V}_{l}$, i.e., $\mathrm{V}_l = \mathrm{V}_{l-1}\cup \{v_l\}$. The rounds continue until the size of knowledge all users is $k$. Once the algorithm terminates, it returns the rate vector $r$ and the payment matrix $p$.

\begin{theorem}\label{thm:Solution}
The output of Algorithm~1 is optimal. 
\end{theorem}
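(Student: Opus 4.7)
The plan is to decompose optimality into three pieces: a bookkeeping identity giving $p_N = r_N$, minimality of $r_N$ (hence of $p_N$), and Pareto optimality over minor coalitions via a sum-utility inequality. First, I would observe by induction on the round index $l$ that the transmitter $t$ selected in round $l$ is never in $R_l$, because $\mathrm{U}_t\subseteq\mathrm{span}(\mathrm{U}_t\cup\mathrm{V}_{l-1})$; consequently, in round $l$ the quantities $r_t$ and $p_t^{+}=\sum_{i\in R_l}1/|R_l|$ each grow by exactly $1$, and no other user's $r_i$ or $p_i^{+}$ changes. This yields $u_i^{+}(r,p)=p_i^{+}-r_i=0$ for every $i\in N$, and summing over $i$ gives $p_N=r_N$. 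Since rationality forces $p_i^{+}\ge r_i$, the inequality $p_N\ge r_N$ holds for any rational solution, so minimizing $r_N$ automatically minimizes $p_N$.

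Next I would collapse the Pareto condition to a sum-utility inequality. For any minor coalition $S$ and any $(\tilde{r},\tilde{p})\in\mathcal{R}_S\times\mathcal{P}_S$, using $\sum_{i\in S}\tilde{p}_i^{+}=\sum_{i\in S}\tilde{p}_i^{-}=\tilde{p}_S$, one obtains
\[
\sum_{i\in S} u_i(\tilde{r},\tilde{p}) \;=\; \sum_{i\in S}|\overline{X}_i| \;-\; \tilde{r}_S,
\]
while for the algorithm output (using $p_i^{+}=r_i$ from above) one has
\[
\sum_{i\in S} u_i(r,p) \;=\; \sum_{i\in S}|\overline{X}_i| \;-\; \sum_{i\in S} p_i^{-}.
\]
A Pareto improvement inside $S$ would require all utilities in $S$ to weakly rise with at least one strict, which is incompatible with the sum failing to rise strictly. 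Stability therefore reduces to the coalition-wise bound
\[
(\star)\qquad \sum_{i\in S} p_i^{-}\;\le\;\min_{\tilde{r}\in\mathcal{R}_S}\tilde{r}_S,\quad \text{for every } S\subset N.
\]

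It then remains to establish two quantitative claims: (a) the number of rounds executed by the algorithm equals $\min_{\tilde{r}\in\mathcal{R}_N}\tilde{r}_N$, so $r_N$ is minimum; and (b) inequality $(\star)$ holds for every minor coalition $S$. For (a), I would combine the standard CDE lower bound $\tilde{r}_{\tilde{S}}\ge|\cap_{j\in N\setminus\tilde{S}}\overline{X}_j|$ with a partition-based LP argument, and show that the max-knowledge greedy transmitter rule attains this bound round by round. For (b), writing $\sum_{i\in S} p_i^{-}=\sum_l |R_l\cap S|/|R_l|$, I would prove $(\star)$ by a per-round invariant bounding this running sum by the residual CDE sum-rate of $S$ given the current knowledge state $\mathrm{V}_{l-1}$; the invariant survives each round precisely because $t$ is a max-knowledge transmitter and $R_l$ is exactly the set of users who gain a dimension from round $l$'s transmission. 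Combining (a) and (b) with the sum-utility identity yields both feasibility and the simultaneous minimality of $r_N$ and $p_N$, establishing optimality.

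The hard part will be $(\star)$: its left-hand side is generated on the fly during execution, while the right-hand side is a combinatorial minimum depending on the structure of each coalition $S$. Showing that the greedy max-knowledge rule aligns $R_l$ with every coalition's coverage debt simultaneously is the crux, and I expect the cleanest route to be an inductive invariant that tracks residual dimension requirements for all $S\subset N$ in parallel.
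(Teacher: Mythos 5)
Your reduction is the same one the paper uses: the identity $p^{+}_i=r_i$ (hence $p_N=r_N$), the observation that rationality forces $\tilde{p}_N\ge\tilde{r}_N$ so that minimizing $r_N$ also minimizes $p_N$, and the collapse of the Pareto condition over a coalition $S$ to the sum-utility inequality, which leaves exactly your condition $(\star)$: $\sum_{i\in S}p^{-}_i\le\tilde{r}_S$ for every minor coalition $S$ and every $\tilde{r}\in\mathcal{R}_S$. That skeleton is correct, and your claim (a) is dispatched in the paper by citation to prior work on this greedy rule, so leaving it as a sketch is acceptable.

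The genuine gap is that you never prove $(\star)$ --- you correctly identify it as the crux and then only gesture at ``a per-round invariant.'' The paper closes this gap in two steps, and it is worth seeing why the decomposition matters. First, a trivial accounting bound: exactly one unit of payment is disbursed per round (split over $R_l$), and a user pays nothing after it achieves omniscience, so $\sum_{i\in S}p^{-}_i=\sum_{l}|R_l\cap S|/|R_l|\le l_s$, where $l_s$ is the round by which the last members of $S$ become omniscient. Second --- and this is the real content --- an exchange argument (Lemma~\ref{lem:key1}) shows $l_s\le\tilde{r}_S$: by induction on $l$ one maintains an $S$-transmittable set of $\tilde{r}_S-l+1$ residual packets that would complete $S$ after the algorithm's first $l-1$ transmissions; in round $l$ the max-knowledge property of the transmitter $t$ rules out $\dim(\mathrm{U}_t\cup\mathrm{V}_{l-1})=k-\tilde{r}_S+l-1$ (else $t$ could not transmit anything new), which guarantees the residual set contains a packet already in $t$'s span that can be swapped out for $y_l$. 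Hence the greedy schedule completes $S$ within $\min_{\tilde{r}\in\mathcal{R}_S}\tilde{r}_S$ rounds simultaneously for all coalitions $S$, which is precisely the alignment you worried about. Your proposed invariant (running payment of $S$ plus residual sum-rate of $S$ bounded by $\tilde{r}_S$) is morally equivalent, but without the exchange step that charges each algorithm round against one packet of the hypothetical schedule for $S$, the invariant does not survive a round in which the transmitter lies outside $S$ or in which $R_l\cap S$ is a proper subset of $R_l$; so as written the proposal does not yet constitute a proof.
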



\subsection{Algorithm~2}
In this section, we present an algorithm that for any given instance provides an optimal solution with maximum sum-utility and maximum min-utility among all optimal solutions. 

Algorithm~2 is similar to Algorithm~1, and the only difference is in the set of users that make payments and the update rule of the payments in each round. We assume that there is a \emph{broker} that collects the payment $p^{-}_i$ by each user $i$, and returns the payment $p^{+}_i$ to each user $i$. The algorithm begins with all-zero payment vectors $p^{+}$ and $p^{-}$, and updates these vectors over the rounds as follows. Consider an arbitrary round $l>0$. Let $P_l$ be the set of users with maximum $|\overline{X}_i|-p^{-}_i$. Assuming that the user $t$ transmits in the round $l$, the algorithm increments $p^{+}_t$ by $1$ and increments $p^{-}_i$ by $1/|P_l|$ for all $i\in P_l$. 

{\SetAlgoNoLine
\begin{algorithm}[t]
\caption{Algo2($n$, $k$, $\{\mathrm{U}_i\}_{i=1}^{n}$, $\mathbb{F}_q$)}
 $N\leftarrow \{1,\dots,n\}$, $K\leftarrow\{1,\dots,k\}$\\
 ${r}_i\leftarrow 0$, $p^{+}_{i}\leftarrow 0$, $p^{-}_{i}\leftarrow 0$ $\forall i\in N$\\
 $l\leftarrow 1$, $\mathrm{V}_0\leftarrow \emptyset$\\
 \While{$\dim(\mathrm{U}_i\cup \mathrm{V}_{l-1})< k$ \emph{for some} $i\in N$}{
  $T_l\leftarrow \{i\in N: \dim(\mathrm{U}_i\cup \mathrm{V}_{l-1})=\max_{i\in N} \dim(\mathrm{U}_i\cup \mathrm{V}_{l-1})\}$\\
  $P_l\leftarrow \{i\in N: |\overline{X}_i|-p^{-}_i = \max_{i\in N} (|\overline{X}_i|-p^{-}_i)\}$\\
  Select an arbitrary user $t\in T_l$\\ 
   $R_l\leftarrow \{i\in N: \mathrm{U}_{t}\not\subseteq \mathrm{span}(\mathrm{U}_i\cup \mathrm{V}_{l-1})\}$\\
   Select an encoding vector $v_{l}\in\mathbb{F}_q^{k}$ such that $v^{i}_{l}=0$ $\forall \{i\in K: u_i\not\in \mathrm{U}_{t}\}$ and $v_{l}\not\in \mathrm{span} (\cup_{i\in R_l}\mathrm{U}_i\cup\mathrm{V}_{l-1})$\\
   Have the user $t$ transmit the packet $y_{l}=\sum_{i\in K} v_{l}^{i} x_i$\\
   ${r}_{t}\leftarrow {r}_{t}+1$\\
   $p^{+}_{t}\leftarrow p^{+}_{t}+1$\\
   $p^{-}_{i}\leftarrow p^{-}_{i}+1/|P_l|$ $\forall i\in P_l$\\
   $\mathrm{V}_{l}\leftarrow \mathrm{V}_{l-1}\cup v_{l}$\\
$l\leftarrow l+1$
 }
 \Return ${r} = [{r}_i]_{i\in N}$ and $p = [p^{+}_{i},p^{-}_i]_{i\in N}$
\end{algorithm}
}

\begin{theorem}\label{thm:Solution2}
The output of Algorithm~2 is optimal. Moreover, the output of Algorithm~2 has maximum sum-utility and maximum min-utility among all optimal solutions.
\end{theorem}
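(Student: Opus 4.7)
My plan is to decompose Theorem~\ref{thm:Solution2} into three claims and prove each: (a) the returned $(r,p)$ is optimal, (b) it maximizes the sum-utility among all optimal solutions, and (c) it maximizes the min-utility among all optimal solutions. For (a), I first observe that Algorithm~2 runs the exact same outer loop as Algorithm~1: it selects the same $T_l$, the same transmitter $t$, the same $R_l$, the same $v_l$, and updates $r_t$ and $\mathrm{V}_l$ identically. Hence the returned rate vector $r$ coincides with Algorithm~1's and, by Theorem~\ref{thm:Solution}, attains the minimum sum-rate, which I denote $r^{*}:=r_N$. Inspecting the payment update in Algorithm~2, in every round where $t$ transmits both $r_t$ and $p^{+}_t$ are incremented by $1$ and no other $r_i$ or $p^{+}_i$ changes; therefore $p^{+}_i=r_i$ for every $i\in N$, which gives $p_N=\sum_i p^{+}_i=r_N=r^{*}$ and, as a by-product, $u^{+}_i=0$ and $u_i(r,p)=|\overline{X}_i|-p^{-}_i$ for every $i$. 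Since any rational feasible solution satisfies $p_N\ge r_N\ge r^{*}$, both the sum-rate and the sum-payment are simultaneously minimized.

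The main remaining task for (a) is feasibility. For any coalition $S\subset N$ and any $(\tilde r,\tilde p)\in\mathcal{R}_S\times\mathcal{P}_S$, the constraint $\tilde p\in\mathcal{P}_S$ forces $\tilde p^{+}_S=\tilde p^{-}_S$, hence $\sum_{i\in S}\tilde u_i=\sum_{i\in S}|\overline{X}_i|-\tilde r_S\le \sum_{i\in S}|\overline{X}_i|-c_S$, where I set $c_S:=\min\{r_S:r\in\mathcal{R}_S\}$. A sufficient condition for $N$ to be $(r,p)$-stable is therefore $\sum_{i\in S}u_i(r,p)\ge \sum_{i\in S}|\overline{X}_i|-c_S$ for every $S\subset N$, which under Algorithm~2 reduces to the cut-style inequality
\[
p^{-}_S:=\sum_{i\in S}p^{-}_i\;\le\; c_S\quad\text{for every coalition }S\subset N.
\]
The structural fact I will use is that the profile $\{p^{-}_i\}$ produced by Algorithm~2 is exactly the greedy water-filling from the top of $\{|\overline{X}_i|\}$ with budget $r^{*}$: a round-by-round induction on the sets $P_l$ shows that at termination there is a water level $c^{*}$ with $p^{-}_i=(|\overline{X}_i|-c^{*})^{+}$ and $\sum_i(|\overline{X}_i|-c^{*})^{+}=r^{*}$. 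I then verify $p^{-}_S\le c_S$ by combining this identity with the cut-set bounds $r_{\tilde S}\ge |\cap_{j\in S\setminus \tilde S}\overline{X}_j|$ characterizing $c_S$ from the problem setup. This combinatorial coupling of the water level with the CDE cut values is the main obstacle; everything else is algebraic.

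With (a) in hand, (b) and (c) are short. Summing the utility definition over all users yields $\sum_{i\in N}u_i(r,p)=p_N-r_N+\sum_i|\overline{X}_i|-p_N=\sum_i|\overline{X}_i|-r_N$, so every optimal solution has the same sum-utility $\sum_i|\overline{X}_i|-r^{*}$, proving (b) trivially. For (c), any optimal solution has $p_N=r_N=r^{*}$ and hence $\sum_i u^{+}_i=0$; combined with rationality $u^{+}_i\ge 0$ this forces $u^{+}_i=0$ and $u_i=|\overline{X}_i|-p^{-}_i$ in every optimal solution. Maximizing $\min_i u_i$ over optimal solutions therefore becomes the classical problem of maximizing $\min_i(|\overline{X}_i|-p^{-}_i)$ subject to $p^{-}_i\ge 0$ and $\sum_i p^{-}_i=r^{*}$, and by the water-filling identity above Algorithm~2 attains its optimum: any $\{\tilde p^{-}_i\}$ with $\min_i(|\overline{X}_i|-\tilde p^{-}_i)>c^{*}$ would require $\tilde p^{-}_i<|\overline{X}_i|-c^{*}$ for every $i$, which is incompatible with $\tilde p^{-}_i\ge 0$ and $\sum_i \tilde p^{-}_i=r^{*}=\sum_i(|\overline{X}_i|-c^{*})^{+}$, giving the desired contradiction.
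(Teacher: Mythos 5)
Your skeleton is sound: the reduction of stability to the coalition-wise inequality $\sum_{i\in S}p^{-}_i\le c_S$ (via $\sum_{i\in S}\tilde u_i=\sum_{i\in S}|\overline{X}_i|-\tilde r_S$ and $p^{+}_i=r_i$) is exactly the paper's reduction, your sum-utility argument matches the paper's, and your min-utility argument via the water-level characterization $p^{-}_i=(|\overline{X}_i|-c^{*})^{+}$ is essentially the paper's Lemma on max min-utility in different notation (you should, however, split off the case $c^{*}>\min_i|\overline{X}_i|$, where the binding bound is $|\overline{X}_n|-\tilde p^{-}_n\le|\overline{X}_n|$ rather than the water level; the paper treats this as a separate case). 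But the proof has a genuine gap at its center: you reduce feasibility to ``$p^{-}_S\le c_S$ for every minor coalition $S$'' and then merely assert that this follows ``by combining'' the water-filling identity with the cut-set bounds $r_{\tilde S}\ge|\cap_{j\in S\setminus\tilde S}\overline{X}_j|$. That combination is the entire difficulty of the theorem, and it is not algebraic. The paper's proof of this step (its longest lemma) does not go through the cut-set characterization at all: it sorts users by $|X_i|$, introduces the suffix coalition $S^{\star}=\{i^{\star},\dots,n\}\supseteq S$, runs Algorithm~2 hypothetically on $S$ and on $S^{\star}$, proves $r^{\star}_{S^{\star}}\le\tilde r_S$ by a span-inclusion argument, proves $\sum_{i\in S^{\star}}p^{-}_i\le p^{\star}_{S^{\star}}=r^{\star}_{S^{\star}}$ by a separate case analysis on whether $|X_1|+r^{\star}_{S^{\star}}\ge k$, and only then chains the inequalities. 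Without an argument of comparable substance, your claim $p^{-}_S\le c_S$ is unsupported, and it is not obvious that your proposed route closes: the water level is determined globally by $r^{*}=r_N$, while $c_S$ is a property of the sub-instance on $S$, and relating the two is precisely what requires the auxiliary runs of the algorithm.

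A second, smaller omission: you never establish rationality of the output, i.e.\ $p^{-}_i\le|\overline{X}_i|$ for all $i$ (equivalently $c^{*}\ge 0$). This does not follow from the water-filling identity alone; the paper derives it from the known bound $r_N\le\min_{i\in N}|\overline{X}_i|+\max_{i\in N}|\overline{X}_i|\le\sum_{i\in N}|\overline{X}_i|$. Since feasibility and optimality are defined only over rational pairs, this step cannot be skipped. (Your round-by-round induction establishing the exact water-filling form of $p^{-}$ is also only sketched, but that part is routine: because the $|\overline{X}_i|$ are integers and the top group's common level is always a multiple of $1/|P_l|$, the level never overshoots the next user's level, so the profile stays flat on its support.)
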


\section{Proofs of Theorems}
\subsection{Proof of Theorem~\ref{thm:Solution}}
In this section, we reserve the notations $r$ and $p$ for the outputs of Algorithm~1.  

\begin{lemma}
$(r,p)$ is rational (i.e., $p^{+}_i\geq r_i$ and $|\overline{X}_i|\geq p^{-}_i$ for all $i\in N$).	
\end{lemma}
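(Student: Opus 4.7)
The plan is to verify the two rationality inequalities separately by a round-by-round accounting of how $r$, $p^{+}$, and $p^{-}$ evolve over the execution of Algorithm~1. The key observations are that (i) in each round, exactly one user transmits and gains exactly one unit of incoming payment, which gives $p^{+}_i = r_i$ identically, and (ii) the payments made by a user $i$ are tied one-to-one to increases in its knowledge dimension, which is bounded by $|\overline{X}_i|$.

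For the first inequality $p^{+}_i \geq r_i$, I would show the stronger equality $p^{+}_i = r_i$ by induction on the round index $l$. At the start, both quantities are zero. In round $l$, only the transmitting user $t$ has $r_t$ updated (increased by $1$), and only $t$'s incoming-payment entries $p_{j,t}$ are updated, each by $1/|R_l|$ for $j\in R_l$. Summing over $R_l$, the total increase in $p^{+}_t$ is $|R_l|\cdot(1/|R_l|) = 1$, matching the increase in $r_t$. For every other user $i\neq t$, neither $r_i$ nor $p^{+}_i$ changes in round $l$. Hence the equality $p^{+}_i = r_i$ is preserved after each round, and in particular holds at termination.

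For the second inequality $|\overline{X}_i| \geq p^{-}_i$, I would argue by counting the rounds in which user $i$ makes a payment. User $i$'s outgoing payment $p^{-}_i$ increases in round $l$ only when $i\in R_l$, and the increment is $1/|R_l|\leq 1$. The crucial fact (already derived in the algorithm's description and following from the construction of $v_l$ together with the support constraint $v_l^i=0$ for $u_i\notin U_t$) is that in round $l$ the size of knowledge $\dim(\mathrm{U}_i\cup\mathrm{V}_l)$ increases by exactly one when $i\in R_l$ and is unchanged otherwise. Since user $i$ starts with $\dim(\mathrm{U}_i) = |X_i| = k-|\overline{X}_i|$ and ends with size of knowledge equal to $k$ (the loop termination condition), the number of rounds with $i\in R_l$ is exactly $|\overline{X}_i|$. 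Therefore
\[
p^{-}_i \;=\; \sum_{l\,:\,i\in R_l} \frac{1}{|R_l|} \;\leq\; \sum_{l\,:\,i\in R_l} 1 \;=\; |\overline{X}_i|,
\]
which is the desired bound.

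I do not expect any serious obstacle in this proof; the only subtlety is justifying that every round in which $i\in R_l$ genuinely raises $i$'s knowledge dimension by one. This is immediate from the two defining conditions on $v_l$ in the algorithm: (a) $v_l\in\mathrm{span}(\mathrm{U}_t)$ by the zero-pattern condition, so $v_l\notin\mathrm{span}(\mathrm{U}_i\cup\mathrm{V}_{l-1})$ precisely when $\mathrm{U}_t\not\subseteq\mathrm{span}(\mathrm{U}_i\cup\mathrm{V}_{l-1})$, i.e., when $i\in R_l$; and (b) the explicit avoidance $v_l\notin\mathrm{span}(\cup_{j\in R_l}\mathrm{U}_j\cup\mathrm{V}_{l-1})$ guarantees the increase for every such $i$. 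Once this point is noted, the two inequalities follow directly from the per-round bookkeeping above.
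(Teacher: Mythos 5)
Your proof is correct and follows essentially the same argument as the paper: $p^{+}_i=r_i$ because each transmission earns its sender exactly one unit in total ($|R_l|\cdot 1/|R_l|=1$), and $p^{-}_i\leq|\overline{X}_i|$ because user $i$ pays at most one unit per round in which its knowledge dimension grows and pays nothing otherwise. The paper states this in two lines; you merely spell out the bookkeeping and the justification that each round with $i\in R_l$ raises $\dim(\mathrm{U}_i\cup\mathrm{V}_l)$ by exactly one, which the paper had already established in the algorithm's description.
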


\begin{proof}
By the procedure of Algorithm~1, $p^{+}_i=r_i$ since the user $i$ receives one unit of payment for each transmission it makes, and $|\overline{X}_i|\geq p^{-}_i$ since the user $i$ pays at most one unit for each transmission that increases its size of knowledge, and it does not pay for any other transmission. 	
\end{proof}

Let $N_s$ be the $s$th subset of users that achieve omniscience simultaneously, and let $l_s$ be the round at which the users in $N_s$ achieve omniscience. Note that the sets $N_s$ are disjoint. Denote by $N^{(s)}$ the set of all users in $N_1,\dots,N_{s}$. Let $m$ be such that $N^{(m)}= N$. By using similar ideas as in the proof of~\cite[Lemma~4]{HS3:2016}, the following result can be shown. 

\begin{lemma}\label{lem:key1}
For any $s\in [m]$ and any $S\subseteq N^{(s)}$ such that $S\cap N_s\neq \emptyset$, we have $l_s\leq \tilde{r}_S$ for all $\tilde{r}\in \mathcal{R}_S$.	
\end{lemma}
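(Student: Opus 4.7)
The plan is to parallel the argument in the proof of \cite[Lemma~4]{HS3:2016}. Fix any coalition $S\subseteq N^{(s)}$ with $S\cap N_s\neq\emptyset$ and any $\tilde r\in\mathcal{R}_S$; the goal is to lower-bound $\tilde r_S$ by $l_s$. First I would single out a user $i^*\in S\cap N_s$ and invoke the standard network-coding characterization $\tilde r_{\tilde S}\geq |\cap_{j\in S\setminus\tilde S}\overline{X}_j|$, specialized to the proper subset $\tilde S=S\setminus\{i^*\}$, to obtain $\tilde r_{S\setminus\{i^*\}}\geq |\overline{X}_{i^*}|$. Combining with $\tilde r_{i^*}\geq 0$ already yields the weaker bound $\tilde r_S\geq |\overline{X}_{i^*}|$, which I will then upgrade to $\tilde r_S\geq l_s$ using the greedy structure of Algorithm~1.

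The key structural facts I would exploit are that (i) the transmitter $t_\ell$ in every round $\ell\leq l_s$ is chosen to have maximum knowledge $\dim(\mathrm{U}_{t_\ell}\cup \mathrm{V}_{\ell-1})$, and (ii) by the way the encoding vector $v_\ell$ is picked, every user in $R_\ell$ has its knowledge increased by exactly one while all other users' knowledge remains unchanged. Since users in $N^{(s-1)}$ have reached full knowledge strictly before round $l_s$, these facts let me identify, for every $\ell\leq l_s$, a proper subcoalition $\tilde S_\ell\subsetneq S$ whose ``missing'' packets $\cap_{j\in S\setminus\tilde S_\ell}\overline{X}_j$ strictly shrinks from round to round in a controlled way.

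With the above setup in hand, I would proceed by induction on $s$. In the base case ($s=1$) no user reaches omniscience before round $l_1$, so $|\overline{X}_{i^*}|\geq l_1$ directly from the fact that $i^*$'s knowledge grows by at most one per round and reaches $k$ only at round $l_1$, yielding $\tilde r_S\geq |\overline{X}_{i^*}|\geq l_1$. For the inductive step I would split the $l_s$ rounds into those occurring during the ``epochs'' $1,\dots,l_{s-1}$ and the $l_s-l_{s-1}$ final rounds: the inductive hypothesis applied to the coalition $S\cap N^{(s-1)}$ (after verifying it meets the lemma's hypothesis at level $s-1$) bounds the first part, while a direct count based on the knowledge-growth of $i^*$ in the final epoch bounds the second part. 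Summing the two via the LP constraints collected in the previous paragraph would close the induction.

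The main obstacle I anticipate is the combinatorial bookkeeping in the inductive step, specifically: ensuring that the subcoalitions $\tilde S_\ell$ obtained from the greedy rule are genuinely proper subsets of $S$ (so that the network-coding constraint applies), and that the family of constraints over $\ell=1,\dots,l_s$ can be aggregated into a single inequality $\tilde r_S\geq l_s$ without double-counting any coordinate $\tilde r_i$. I expect the cleanest way to handle this is to use a fractional-covering/LP-duality argument, with weights mirroring the $1/|R_\ell|$ rule used by Algorithm~1 for updating payments, so that the dual of $\min\tilde r_S$ over $\mathcal{R}_S$ exhibits a feasible dual solution with objective value exactly $l_s$.
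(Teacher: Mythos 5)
Your proposal takes a genuinely different route from the paper (cut-set constraints plus an LP-duality aggregation, versus the paper's exchange argument on the algorithm's transmission sequence), but as written it contains a concrete error and leaves the essential step unproved. The error is in the base case: you claim $|\overline{X}_{i^*}|\geq l_1$ ``from the fact that $i^*$'s knowledge grows by at most one per round and reaches $k$ only at round $l_1$.'' That fact gives the \emph{reverse} inequality: the knowledge of $i^*$ starts at dimension $|X_{i^*}|$, increases by at most one per round, and equals $k$ at round $l_1$, hence $|X_{i^*}|+l_1\geq k$, i.e., $l_1\geq|\overline{X}_{i^*}|$. Moreover the inequality you want can genuinely fail to be recoverable this way: whenever $\mathrm{U}_{t}\subseteq\mathrm{span}(\mathrm{U}_{i^*}\cup\mathrm{V}_{l-1})$ the user $i^*$ lies outside $R_l$ and gains nothing in round $l$, so $l_1$ can strictly exceed $|\overline{X}_{i^*}|$ (e.g., two users with identical initial packet sets, only one of whom transmits). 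So the chain $\tilde{r}_S\geq|\overline{X}_{i^*}|\geq l_1$ does not establish the base case, and the same reversed bound would poison the ``direct count based on the knowledge-growth of $i^*$ in the final epoch'' in your inductive step.

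The second, larger issue is that the heart of your plan---exhibiting a family of proper subcoalitions $\tilde{S}_\ell$ whose cut-set constraints aggregate (via a fractional cover) to $\tilde{r}_S\geq l_s$---is exactly the part you defer as ``the main obstacle,'' and no construction is given. Note that $l_s$ is determined by the algorithm's trajectory over \emph{all} of $N$ (which users tie in $T_l$, who is selected, which users outside $S$ absorb transmissions), while a dual certificate for $\min\tilde{r}_S$ over $\mathcal{R}_S$ must be built from cuts of $S$ alone; in particular the weights $1/|R_\ell|$ involve users outside $S$ and do not obviously yield a feasible dual. The paper sidesteps all of this with an exchange argument: it shows by induction on $l$ that after the algorithm's first $l-1$ transmissions, some $S$-transmittable set of only $\tilde{r}_S-l+1$ further packets completes omniscience for $S$, the key step being that the maximum-knowledge transmitter $t$ satisfies $\dim(\mathrm{U}_t\cup\mathrm{V}_{l-1})>k-\tilde{r}_S+l-1$, so one packet of the hypothetical completion lies in $t$'s span and can be swapped for $y_l$. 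This yields $l_s\leq\tilde{r}_S$ directly, with no duality and no per-user packet counts. To salvage your approach you would need to actually produce the dual certificate of value $l_s$; until then the proof is incomplete.
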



\begin{proof}
Fix an arbitrary $s\in [m]$. Fix an arbitrary ${S\subseteq N^{(s)}}$ such that ${S\cap N_s\neq \emptyset}$, and an arbitrary ${\tilde{r}\in\mathcal{R}_S}$. Let ${\{y_l\}_{1\leq l\leq l_s}}$ be the set of the algorithm's choice of packets being transmitted from the round $1$ to the round $l_s$, and let ${\{v_l\}_{1\leq l\leq l_s}}$ be the set of encoding vectors of these packets. 


For any $S\subseteq N$, we say that a set of packets is \emph{$S$-transmittable} if the encoding vector of each packet in the set lies in $\mathrm{span}(\mathrm{U}_i)$ for some $i\in S$. Let ${\ell\triangleq\min(\tilde{r}_S,l_s)}$. We prove by induction (on $l$) that, for every $1\leq l\leq \ell$, there exists an $S$-transmittable set of $\tilde{r}_S-l+1$ packets such that if they were transmitted after the transmission of all the packets in the set $\{y_1,\dots,y_{l-1}\}$, then $S$ achieves omniscience. 

For the base case of $l=1$, there exists an $S$-transmittable set of $\tilde{r}_S$ packets such that if they were transmitted, then $S$ achieves omniscience (since $\tilde{r}\in\mathcal{R}_S$). Next, consider an arbitrary round $l$, $1<l\leq \ell$. Fix the set of packets $Y=\{y_1,\dots,y_{l-1}\}$. By the induction hypothesis, there exists an $S$-transmittable set of $\tilde{r}_S-l+1$ packets such that if they were transmitted after the transmission of $Y$, then $S$ achieves omniscience. Let $\tilde{{Y}}\triangleq \{\tilde{y}_{l},\dots,\tilde{y}_{\tilde{r}_S}\}$ and $\tilde{V} \triangleq \{\tilde{v}_{l},\dots,\tilde{v}_{\tilde{r}_S}\}$ be such a set of packets and the set of their encoding vectors, respectively. Assume that the algorithm selects the user $t$, which may or may not be in $S$, to transmit in the round $l$. 



Since \[\dim(\mathrm{U}_i\cup\mathrm{V}_{l-1})\geq k-\tilde{r}_S+l-1\] for all $i\in S$ (noting that, after the transmission of $Y\cup\tilde{Y}$, $S$ achieves omniscience), and \[\dim(\mathrm{U}_t\cup\mathrm{V}_{l-1})\geq \dim(\mathrm{U}_i\cup\mathrm{V}_{l-1})\] for all $i\in N$ (noting that, in the round $l$, the size of the knowledge of the user $t$ is greater than or equal to that of any other user $i\in N$), then \[\dim(\mathrm{U}_t\cup\mathrm{V}_{l-1})\geq k-\tilde{r}_S+l-1.\] If \[\dim(\mathrm{U}_t\cup\mathrm{V}_{l-1})= k-\tilde{r}_S+l-1,\] then the user $t$ cannot transmit in the round $l$ since the user $t$ needs the set of all the packets in $\tilde{Y}$ so as to achieve omniscience. This is, however, a contradiction (by assumption). Thus, \[\dim(\mathrm{U}_t\cup\mathrm{V}_{l-1})>k-\tilde{r}_S+l-1,\] and consequently, $\tilde{{Y}}$ contains some packet $\tilde{y}$ such that its encoding vector $\tilde{v}\in\mathrm{span}(\mathrm{U}_t\cup\mathrm{V}_{l-1})$. Fix such a packet $\tilde{y}$ and its encoding vector $\tilde{v}$. Note that, after the transmission of $Y\cup \tilde{Y}\setminus \{\tilde{y}\}$, the user $t$ achieves omniscience (i.e., $\dim(\mathrm{U}_t\cup \mathrm{V}_{l-1}\cup \tilde{V}\setminus \{\tilde{v}\})=k$), and any user $i\in S$, $i\neq t$, needs no more than one packet so as to achieve omniscience (i.e., $\dim(\mathrm{U}_i\cup \mathrm{V}_{l-1}\cup \tilde{V}\setminus \{\tilde{v}\})\geq k-1$ for all $i\in S$, $i\neq t$). (The deletion of one packet decreases the size of knowledge of any user by at most one.)  


Consider an arbitrary $i\in S$, $i\neq t$. We consider two cases: (i) ${v_l\in \mathrm{span}(\mathrm{U}_i\cup \mathrm{V}_{l-1}\cup \tilde{V}\setminus \{\tilde{v}\})}$, and (ii) ${v_l\not\in \mathrm{span}(\mathrm{U}_i\cup \mathrm{V}_{l-1}\cup \tilde{V}\setminus \{\tilde{v}\})}$. In the case (i), since \[v_l\in \mathrm{span}(\mathrm{U}_i\cup \mathrm{V}_{l-1}\cup \tilde{V}\setminus \{\tilde{v}\})\] and \[v_l\in \mathrm{span}(\mathrm{U}_t\cup \mathrm{V}_{l-1}),\] then \[\mathrm{span}(\mathrm{U}_t\cup\mathrm{V}_{l-1})\subseteq \mathrm{span}(\mathrm{U}_i\cup \mathrm{V}_{l-1}\cup \tilde{V}\setminus \{\tilde{v}\}),\] or equivalently, \[\mathrm{span}(\mathrm{U}_t\cup\mathrm{V}_{l-1}\cup \tilde{V}\setminus \{\tilde{v}\})\subseteq \mathrm{span}(\mathrm{U}_i\cup\mathrm{V}_{l-1}\cup \tilde{V}\setminus \{\tilde{v}\}).\] Thus, \[\dim(\mathrm{U}_t\cup\mathrm{V}_{l-1}\cup \tilde{V}\setminus \{\tilde{v}\})\leq \dim(\mathrm{U}_i\cup\mathrm{V}_{l-1}\cup \tilde{V}\setminus \{\tilde{v}\}),\] or equivalently, \[\dim(\mathrm{U}_i\cup\mathrm{V}_{l-1}\cup \tilde{V}\setminus \{\tilde{v}\}) = k\] since \[\dim(\mathrm{U}_t\cup\mathrm{V}_{l-1}\cup \tilde{V}\setminus \{\tilde{v}\})=k.\] Thus, after the transmission of $Y\cup \tilde{Y}\setminus \{\tilde{y}\}$, the user $i$ achieves omniscience. In the case (ii), since \[v_l\not\in \mathrm{span}(\mathrm{U}_i\cup \mathrm{V}_{l-1}\cup \tilde{V}\setminus \{\tilde{v}\})\] and \[\dim(\mathrm{U}_i\cup\mathrm{V}_{l-1}\cup \tilde{V}\setminus \{\tilde{v}\}) \geq k-1,\] then \[\dim(\mathrm{U}_i\cup\mathrm{V}_{l-1}\cup \{v_l\}\cup \tilde{V}\setminus \{\tilde{v}\}) = k.\] Thus, after the transmission of $Y\cup \{y_l\}\cup\tilde{Y}\setminus \{\tilde{y}\}$, the user $i$ achieves omniscience. By (i) and (ii), it follows that $S$ achieves omniscience after the transmission of $Y\cup\{y_l\}\cup\tilde{{Y}}\setminus \tilde{y}$. Thus, there exists an $S$-transmittable set of $\tilde{r}_S-l$ packets $\tilde{Y}\setminus \tilde{y}$ such that if they were transmitted after the transmission of $Y\cup y_l$, then $S$ achieves omniscience. This completes the inductive proof. 

From the above result, it follows that $S$ achieves omniscience by the algorithm's choice of packets $\{y_l\}_{1\leq l\leq \ell}$ being transmitted from the round $1$ to the round $\ell$. Now there are two cases: (i) $l_s>\tilde{r}_S$, and (ii) $l_s\leq \tilde{r}_S$. In the case (i), $\ell=\tilde{r}_S$, and hence, all users in $S$ must achieve omniscience by the round $\ell$ ($=\tilde{r}_S$). This is, however, a contradiction since some user(s) in $S$, particularly any user in $S\cap N_s$, achieves omniscience in the round $l_s$ ($>\ell$) (by definition). Note that $S\cap N_s\neq \emptyset$ (by assumption). In the case (ii), $\ell=l_s$, and the lemma follows directly. This completes the proof.
\end{proof}

\begin{lemma}\label{lem:feasibility1}
$(r,p)$ is feasible (i.e., $N$ is $(r,p)$-stable). 
\end{lemma}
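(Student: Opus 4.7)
The plan is to show that under the output $(r,p)$ of Algorithm~1, the sum-utility over any minor coalition $S$ is already maximal among all alternatives available to $S$; since any Pareto improvement by $S$ would strictly raise its sum-utility, this suffices to preclude a Pareto-dominating deviation.

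First, I would use the rationality observation $p^{+}_i = r_i$ (which holds because Algorithm~1 pays exactly one unit to the transmitter in every round) to reduce $u_i(r,p)$ to $|\overline{X}_i| - p^{-}_i$, so that $\sum_{i\in S}u_i(r,p) = \sum_{i\in S}|\overline{X}_i| - \sum_{i\in S}p^{-}_i$. For any alternative pair $(\tilde{r}, \tilde{p})$ with $\tilde{r}\in \mathcal{R}_S$ and $\tilde{p}\in\mathcal{P}_S$, the identity $\sum_{i\in S}\tilde{p}^{+}_i = \sum_{i\in S}\tilde{p}^{-}_i = \tilde{p}_S$ cancels the payment terms and gives $\sum_{i\in S}u_i(\tilde{r},\tilde{p}) = \sum_{i\in S}|\overline{X}_i| - \tilde{r}_S$. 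Thus it suffices to prove $\sum_{i\in S} p^{-}_i \leq \tilde{r}_S$ for every $\tilde{r}\in\mathcal{R}_S$.

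Next, I would bound $\sum_{i\in S} p^{-}_i$ by the round at which the ``last'' user of $S$ achieves omniscience. Let $s^{*}\triangleq \max\{s : S\cap N_s \neq \emptyset\}$, so that $S\subseteq N^{(s^{*})}$ and $S\cap N_{s^{*}}\neq\emptyset$. Since Algorithm~1 distributes a total of one unit of payment among the users in $R_l$ in each round $l$, we have $\sum_{i\in S} p^{-}_i = \sum_{l\geq 1} |R_l \cap S|/|R_l|$. For $l > l_{s^{*}}$, every $i\in S$ has already reached omniscience, i.e., $\mathrm{span}(\mathrm{U}_i\cup\mathrm{V}_{l-1}) = \mathbb{F}_q^k$, which forces $\mathrm{U}_t\subseteq \mathrm{span}(\mathrm{U}_i\cup\mathrm{V}_{l-1})$ for the round-$l$ transmitter $t$, and hence $R_l\cap S = \emptyset$. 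Consequently $\sum_{i\in S} p^{-}_i \leq l_{s^{*}}$.

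Finally, applying Lemma~\ref{lem:key1} with $s=s^{*}$ and coalition $S$ yields $l_{s^{*}} \leq \tilde{r}_S$ for every $\tilde{r}\in \mathcal{R}_S$, closing the chain $\sum_{i\in S}p^{-}_i \leq l_{s^{*}} \leq \tilde{r}_S$. The main subtlety I anticipate is the third step: formally verifying that once a user in $S$ attains full knowledge it drops out of every subsequent $R_l$ and therefore accrues no additional payment, so that the per-round mass-one bound may be truncated at $l_{s^{*}}$. With that detail in place, the remaining steps are essentially arithmetic.
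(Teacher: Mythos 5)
Your proposal is correct and follows essentially the same route as the paper's proof: reduce the claim to $\sum_{i\in S}p^{-}_i\leq \tilde r_S$ via the sum-utility cancellation $\sum_{i\in S}\tilde p^{+}_i=\sum_{i\in S}\tilde p^{-}_i$ and $p^{+}_i=r_i$, bound $\sum_{i\in S}p^{-}_i$ by $l_{s^\star}$ using the one-unit-per-round payment structure and the fact that omniscient users leave $R_l$, and close with Lemma~\ref{lem:key1}. The paper phrases it as a contradiction rather than a direct sum-utility maximality argument, but the content is identical.
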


\begin{proof}
The proof follows by contradiction. Suppose that $(r,p)$ is not feasible (i.e., $N$ is not $(r,p)$-stable). Thus, there exists $\tilde{r}\in \mathcal{R}_S$ and $\tilde{p}\in \mathcal{P}_S$ for some $S\subset N$ such that $u_i(r,p)\leq u_i(\tilde{r},\tilde{p})$ for all $i\in S$, and $u_i(r,p)<u_i(\tilde{r},\tilde{p})$ for some $i\in S$. Thus, 
\[\sum_{i\in S} u_i(\tilde{r},\tilde{p})>\sum_{i\in S} u_i(r,p).\] Note that \[\sum_{i\in S} u_i(\tilde{r},\tilde{p}) = \sum_{i\in S} p^{+}_i - \sum_{i\in S} \tilde{r}_i +\sum_{i\in S} |\overline{X}_i|-\sum_{i\in S} p^{-}_i.\] Since $\sum_{i\in S} p^{+}_i = \sum_{i\in S} p^{-}_i$ for all $p\in \mathcal{P}_S$, then \[\sum_{i\in S} u_i(\tilde{r},\tilde{p}) = \sum_{i\in S} |\overline{X}_i| - \sum_{i\in S} \tilde{r}_i.\]	Since $r_i = p^{+}_i$ for all $i\in N$, then $\sum_{i\in S} r_i = \sum_{i\in S} p^{+}_i$. Thus, \[\sum_{i\in S} u_i(r,p) = \sum_{i\in S} |\overline{X}_i| - \sum_{i\in S} p^{-}_i.\] Putting these arguments together, we get 
\begin{equation}\label{eq:Eq4}
\sum_{i\in S} p^{-}_i>\sum_{i\in S} \tilde{r}_i.
\end{equation} Let $s\in [m]$ be such that $S\subseteq N^{(s)}$ and $S\cap N_s\neq \emptyset$. Note that all the users in $S$ achieve omniscience by the round $l_s$. By the structure of the proposed algorithm, one unit of payment is made in each round (each user in $R_l$ pays $1/|R_l|$ units of payment in the round $l$), and no user pays in any round after it achieves omniscience (if the user $i$ is complete at the beginning of the round $l$, then $i\not\in R_l$). Thus, it is easy to see that \[\sum_{i\in S} p^{-}_i\leq l_s.\] Moreover, by the result of Lemma~\ref{lem:key1}, it follows that \[
l_s\leq \sum_{i\in S} \tilde{r}_i\] for all $\tilde{r}\in \mathcal{R}_S$. By combining these two inequalities, we get
\begin{equation}\label{eq:Eq7}
\sum_{i\in S} p^{-}_i \leq 	\sum_{i\in S} \tilde{r}_i.
\end{equation} By comparing~\eqref{eq:Eq4} and~\eqref{eq:Eq7}, we arrive at a contradiction. Thus, $N$ is $(r,p)$-stable, as was to be shown.
\end{proof}

\begin{lemma}[\cite{SSBR:2010}]\label{lem:sumratemin}
For any $\tilde{r}\in \mathcal{R}_N$, we have $\tilde{r}_N\geq r_N$.	
\end{lemma}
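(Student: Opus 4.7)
The plan is to derive the lemma from the already-established Lemma~\ref{lem:key1} by a simple bookkeeping argument, so almost all of the technical work has been done. The key observation is that Algorithm~1 performs exactly one broadcast per round and charges that broadcast to the single transmitter $t\in T_l$ by incrementing $r_t$ by one. Hence, if the algorithm terminates after $L$ rounds, then $r_N=\sum_{i\in N} r_i = L$.

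Next, I would identify $L$ with $l_m$, the round at which the last batch of users achieves omniscience. By the definition of the sequence $N_1,\dots,N_m$ in the preceding discussion, $N^{(m)}=N$ and $N_m\neq\emptyset$, so after round $l_m$ every user has full-rank knowledge; the while-loop termination condition implies there are no further rounds. Therefore $L=l_m$ and $r_N=l_m$.

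The final step is to invoke Lemma~\ref{lem:key1} with $s=m$ and $S=N$. The hypotheses of that lemma are satisfied because $N\subseteq N^{(m)}=N$ and $N\cap N_m=N_m\neq\emptyset$, yielding $l_m\leq \tilde{r}_N$ for every $\tilde{r}\in\mathcal{R}_N$. Combined with $r_N=l_m$, this gives $r_N\leq \tilde{r}_N$ for all $\tilde{r}\in\mathcal{R}_N$, which is the claim.

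The main obstacle here was really Lemma~\ref{lem:key1}, i.e., the lower bound $l_s\leq \tilde{r}_S$ on any coalition's omniscience-achieving sum-rate; once that lemma is in hand, the present statement is just the specialization to $S=N$ together with the identification of $r_N$ with the round count. No additional difficulty is expected.
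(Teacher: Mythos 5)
Your proof is correct, but it takes a genuinely different route from the paper: the paper does not prove this lemma at all, instead attributing it to \cite{SSBR:2010} and deferring to that reference (Algorithm~1 is essentially the deterministic minimum sum-rate algorithm of that work, so the optimality of $r_N$ is imported as a known fact). You instead derive the lemma internally as the specialization $s=m$, $S=N$ of Lemma~\ref{lem:key1}. The two bookkeeping facts you need are both sound: each round increments exactly one $r_t$ by one, so $r_N$ equals the number of rounds, and the while-loop terminates precisely at round $l_m$, the round at which the last batch $N_m$ achieves omniscience, so $r_N=l_m$; the hypotheses of Lemma~\ref{lem:key1} hold since $N\subseteq N^{(m)}=N$ and $N\cap N_m=N_m\neq\emptyset$, giving $l_m\leq\tilde{r}_N$ for all $\tilde{r}\in\mathcal{R}_N$. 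There is no circularity, since the paper's proof of Lemma~\ref{lem:key1} nowhere invokes Lemma~\ref{lem:sumratemin}. What your approach buys is self-containedness --- the minimum sum-rate property falls out as a corollary of machinery the paper already builds, making clear that Lemma~\ref{lem:key1} is a per-coalition refinement of the sum-rate optimality result --- whereas the paper's citation keeps the exposition short at the cost of relying on an external proof.
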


\begin{proof}
The proof can be found in \cite{SSBR:2010}.	
\end{proof}

\begin{lemma}\label{lem:optimality1}
$(r,p)$ is optimal (i.e., there is not a feasible $(\tilde{r},\tilde{p})$ such that $r_N>\tilde{r}_N$ or $p_N>\tilde{p}_N$).	
\end{lemma}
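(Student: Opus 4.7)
The plan is to break the claim into its two halves, \textbf{sum-rate optimality} and \textbf{sum-payment optimality}, and reduce both to facts already on the table.

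First, I will argue that $r_N\leq \tilde r_N$ for every feasible $(\tilde r,\tilde p)$. Feasibility of $(\tilde r,\tilde p)$ requires $\tilde r\in\mathcal{R}_N$, so Lemma~\ref{lem:sumratemin} gives $\tilde r_N\geq r_N$ directly. This disposes of the sum-rate half with no further work.

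Next, I will establish the identity $p_N=r_N$ for the algorithm's output. This follows in one line from the rationality lemma: since $p^{+}_i=r_i$ for every $i\in N$, summing over $i$ yields $p_N=\sum_{i\in N}p^{+}_i=\sum_{i\in N}r_i=r_N$. (Equivalently, one can observe from the algorithm itself that in every round $l$ exactly one unit of rate and, by distributing $1/|R_l|$ to each of the $|R_l|$ payers, exactly one unit of payment are added.)

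For the sum-payment half, I will use the standing rationality assumption applied to $(\tilde r,\tilde p)$: the condition $u^{+}_i(\tilde r,\tilde p)\geq 0$ gives $\tilde p^{+}_i\geq \tilde r_i$ for every $i\in N$. Summing over $i$ yields
\[
\tilde p_N \;=\; \sum_{i\in N}\tilde p^{+}_i \;\geq\; \sum_{i\in N}\tilde r_i \;=\; \tilde r_N.
\]
Chaining this with Lemma~\ref{lem:sumratemin} and the identity $p_N=r_N$ established above gives $\tilde p_N\geq \tilde r_N\geq r_N=p_N$, as required. Combining the two halves, there is no feasible $(\tilde r,\tilde p)$ with either $\tilde r_N<r_N$ or $\tilde p_N<p_N$, so $(r,p)$ is optimal.

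There is no real obstacle here: the argument is essentially a bookkeeping chain $p_N=r_N\leq \tilde r_N\leq \tilde p_N$, where the first equality comes from $p^{+}_i=r_i$ (rationality is tight for the algorithm's output), the middle inequality is Lemma~\ref{lem:sumratemin}, and the last inequality is the rationality of $(\tilde r,\tilde p)$. The only subtlety to be careful about is that the statement of optimality is phrased with the dominance ``$r_N>\tilde r_N$ or $p_N>\tilde p_N$'', so I must verify \emph{both} inequalities simultaneously rather than one at a time; this is exactly what the chain above does.
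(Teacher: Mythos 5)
Your argument is correct and is essentially identical to the paper's own proof: both dispose of the sum-rate half via Lemma~\ref{lem:sumratemin} and then establish the sum-payment half through the chain $\tilde p_N\geq \tilde r_N\geq r_N=p_N$, using rationality of $(\tilde r,\tilde p)$ and the identity $p^{+}_i=r_i$ for the algorithm's output. No differences worth noting.
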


\begin{proof}
Consider an arbitrary feasible $(\tilde{r},\tilde{p})$, $\tilde{r}\in \mathcal{R}_N$ and $\tilde{p}\in \mathcal{P}_N$. We shall show that $\tilde{r}_N\geq r_N$ and $\tilde{p}_N\geq p_N$. By Lemma~\ref{lem:sumratemin}, $\tilde{r}_N\geq r_N$ for all $\tilde{r}\in \mathcal{R}_N$. Since $(\tilde{r},\tilde{p})$ is feasible, then $(\tilde{r},\tilde{p})$ is rational. Thus, $\tilde{p}^{+}_i\geq \tilde{r}_i$ for all $i\in N$, and consequently, $\tilde{p}_N\geq \tilde{r}_N$. Note that $p_N = r_N$ since $p^{+}_i = r_i$. Thus, $\tilde{p}_N\geq \tilde{r}_N\geq r_N = p_N$. This completes the proof.   	
\end{proof}

\subsection{Proof of Theorem~\ref{thm:Solution2}}
In this section, we reserve the notations $r$ and $p$ for the outputs of Algorithm~2.

\begin{lemma}\label{lem:rationalityAlgo2}
$(r,p)$ is rational.	
\end{lemma}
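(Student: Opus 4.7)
Rationality of $(r,p)$ asks for $p_i^{+} \geq r_i$ and $p_i^{-} \leq |\overline{X}_i|$ for every $i \in N$. The first inequality is immediate from the update rules of Algorithm~2: in each round, whoever transmits has both $r_t$ and $p_t^{+}$ incremented by $1$, so at termination $p_i^{+} = r_i$ and $u_i^{+}(r,p) = 0$ for all $i$.

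For the second inequality I would identify a clean phase structure for the payment updates. Set $D_i^{l} \triangleq |\overline{X}_i| - p_i^{-,l}$ to be the ``remaining capacity'' of user $i$ at the end of round $l$, let $M_{0} \triangleq \max_i |\overline{X}_i|$, and for $p \in \{1,\ldots,M_{0}\}$ let $\mathcal{Q}_p \triangleq \{i \in N : |\overline{X}_i| \geq M_{0} - p + 1\}$. Proceeding by induction on $p$, I would show that the rounds of Algorithm~2 decompose into consecutive phases in which phase $p$ lasts exactly $|\mathcal{Q}_p|$ rounds with $P_l = \mathcal{Q}_p$ throughout, so each $i \in \mathcal{Q}_p$ sees $D_i$ decrease uniformly from $M_{0} - p + 1$ at the start of the phase to $M_{0} - p$ at its end. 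The key combinatorial point is that $|\mathcal{Q}_p|$ decrements of size $1/|\mathcal{Q}_p|$ add up to exactly one integer unit, so each active $D_i$ returns to an integer value and the new layer of users $\{j : |\overline{X}_j| = M_{0} - p\}$ ties with the active set precisely at the phase boundary, yielding $\mathcal{Q}_{p+1} \supseteq \mathcal{Q}_p$.

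To close the argument I would bound the total number of rounds by $r_N \leq \sum_{i \in N} |\overline{X}_i|$: since each round has $|R_l| \geq 1$ and the knowledge of any user can grow by at most $|\overline{X}_i|$, we get $r_N \leq \sum_l |R_l| \leq \sum_i |\overline{X}_i|$. As $\sum_{p=1}^{M_{0}} |\mathcal{Q}_p| = \sum_i |\overline{X}_i|$, Algorithm~2 must terminate within some phase $p^{\star} \leq M_{0}$; at any moment in that phase, active users have $D_i \in [M_{0} - p^{\star}, M_{0} - p^{\star} + 1] \subseteq [0,\infty)$, while inactive users keep $D_i = |\overline{X}_i| \geq 0$. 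The main obstacle is the phase identification itself: the argument leans crucially on the integrality of the initial $|\overline{X}_i|$ together with the fact that the uniform $1/|P_l|$ decrements return the $D_i$'s to integer values exactly at phase boundaries, without which a discrete water-filling style process could plausibly overshoot and break the invariant.
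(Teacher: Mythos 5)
Your proof is correct, and it rests on the same two pillars as the paper's own argument: the water-filling equalization of the quantities $|X_i|+p^{-}_i$ across users that have started paying, and the bound $r_N\leq\sum_{i\in N}|\overline{X}_i|$ on the total payment (which equals the number of rounds). The packaging differs in ways worth noting. Where the paper simply asserts that ``by the procedure of Algorithm~2, $|X_i|+p^{-}_i(l)$ are the same for all $i$ with $p^{-}_i(l)>0$,'' you prove this via the explicit phase decomposition $\mathcal{Q}_1\subseteq\mathcal{Q}_2\subseteq\cdots$, making precise the integrality point (that $|\mathcal{Q}_p|$ decrements of $1/|\mathcal{Q}_p|$ resum to a whole unit exactly at a phase boundary) that the paper leaves implicit. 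For the round bound, the paper cites $r_N\leq\min_{i}|\overline{X}_i|+\max_{i}|\overline{X}_i|$ from an external lemma and then weakens it to $\sum_i|\overline{X}_i|$; you instead derive $r_N\leq\sum_l|R_l|\leq\sum_i|\overline{X}_i|$ by counting knowledge increments, which is self-contained but relies on the unproved (though easy, and needed for termination anyway) fact that $R_l\neq\emptyset$ in every executed round: if $R_l=\emptyset$, every user's knowledge would contain $\mathrm{span}(\mathrm{U}_t\cup\mathrm{V}_{l-1})$, which has maximal dimension among all users, forcing all knowledge spaces to coincide and hence to contain $\cup_{i\in N}\mathrm{U}_i$, contradicting non-termination. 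Finally, your argument is direct while the paper's is by contradiction (supposing $|X_i|+p^{-}_i(l)>k$ for some $i$ and deducing $\sum_i p^{-}_i>\sum_i|\overline{X}_i|$), but that difference is cosmetic; either version closes the lemma.
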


\begin{proof}
Let $r_i(l)$, $p^{+}_i(l)$, and $p^{-}_i(l)$ be $r_i$, $p^{+}_i$, and $p^{-}_i$ at the end of the round $l-1$, respectively. Note that $r_i = r_i(l_m+1)$, $p^{+}_i = p^{+}_i(l_m+1)$, and $p^{-}_i = p^{-}_i(l_m+1)$. We will show that $p^{+}_i(l)\geq r_i(l)$ and $|\overline{X}_i|\geq p^{-}_i(l)$ for all $i\in N$ and all $l\in [l_m+1]$. Fix an arbitrary $l\in [l_m+1]$. By the procedure of Algorithm~2, $p^{+}_i(l) = r_i(l)$, and particularly, $p^{+}_i = r_i$. We next show that $|\overline{X}_i|\geq p^{-}_i(l)$. The proof follows by contradiction. Suppose that $|\overline{X}_i|< p^{-}_i(l)$ for some $i$. Note that \[\max_{i\in N} |\overline{X}_i| - p^{-}_i(l) = k - \min_{i\in N} (|X_i|+p^{-}_i(l)).\] Thus, \[P_l = \{i\in N: |X_i|+p^{-}_i(l) = \min_{i\in N} (|X_i|+p^{-}_i(l))\}.\] By the procedure of Algorithm~2, $|X_i|+p^{-}_i(l)$ are the same for all $i$ such that $p^{-}_i(l)>0$, and $|X_i|+p^{-}_i(l) = |X_i|\leq k$ for all $i$ such that $p^{-}_i(l) = 0$. Since $|X_i|+p^{-}_i(l)>k$ for some $i$ (by assumption), then $|X_i|+p^{-}_i(l)>k$ for all $i$, and consequently, $p^{-}_i(l)>0$ for all $i$ (since $|X_i|\leq k$ for all $i$). Since $p^{-}_i(l)$ is non-decreasing in $l$ for all $i$, then $|X_i|+p^{-}_i>k$ for all $i$, or equivalently, $p^{-}_i>|\overline{X}_i|$ for all $i$. Thus, \[\sum_{i\in N} p^{-}_i>\sum_{i\in N} |\overline{X}_i|,\] and consequently, \[r_N>\sum_{i\in N} |\overline{X}_i|\] since \[\sum_{i\in N} p^{-}_i = \sum_{i\in N} p^{+}_i =  r_N.\] This is, however, a contradiction since \[r_N\leq \min_{i\in N} |\overline{X}_i|+\max_{i\in N} |\overline{X}_i|\] (by the result of \cite[Lemma~3]{RSS:2010}), and consequently, \[r_N\leq \sum_{i\in N} |\overline{X}_i|.\] Thus, $|\overline{X}_i|\geq p^{-}_i(l)$ for all $i$ and all $l$, and particularly, $|\overline{X}_i|\geq p^{-}_i$ for all $i$. This completes the proof.
\end{proof}



\begin{lemma}
$(r,p)$ is feasible. 
\end{lemma}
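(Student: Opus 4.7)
My plan mirrors the contradiction template used in the proof of Lemma~\ref{lem:feasibility1}. Assuming that $(r,p)$ is not feasible, I would extract a coalition $S\subsetneq N$ along with $\tilde r\in\mathcal{R}_S$ and $\tilde p\in\mathcal{P}_S$ realizing a Pareto improvement, and sum the utility inequalities over $i\in S$ to obtain $\sum_{i\in S}u_i(\tilde r,\tilde p)>\sum_{i\in S}u_i(r,p)$. Since $\sum_{i\in S}\tilde p^{+}_i=\sum_{i\in S}\tilde p^{-}_i$ because $\tilde p\in\mathcal{P}_S$, and since $p^{+}_i=r_i$ for every $i\in N$ as shown inside Lemma~\ref{lem:rationalityAlgo2}, the same algebra that appears in Lemma~\ref{lem:feasibility1} reduces this to
\[
\sum_{i\in S} p^{-}_i > \tilde r_S.
\]

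The remainder of the proof would establish the opposite inequality, $\sum_{i\in S}p^{-}_i\leq \tilde r_S$, for every coalition $S\subsetneq N$ and every $\tilde r\in\mathcal{R}_S$. Unlike Algorithm~1, whose per-round payments are confined to the benefitting set $R_l$ and immediately deliver the clean bound $\sum_{i\in S}p^{-}_i\leq l_s$, the payments in Algorithm~2 flow to $P_l$, so a user in $S$ may continue to be charged in rounds $l>l_s$ and the round-by-round bookkeeping breaks. My strategy is to exploit instead the water-filling structure recovered inside the proof of Lemma~\ref{lem:rationalityAlgo2}: at termination there is a common ``water level'' $c\leq k$ with $p^{-}_i=\max(0,c-|X_i|)$ for every $i\in N$, pinned down by $\sum_{i\in N}\max(0,c-|X_i|)=r_N$. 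I would then combine this closed form with the Slepian--Wolf-type constraints $\tilde r_{\tilde S}\geq |\cap_{j\in S\setminus \tilde S}\overline X_j|$ for every non-empty $\tilde S\subsetneq S$ defining $\mathcal{R}_S$, and with the inequality $r_N\leq \min_{i\in N}|\overline X_i|+\max_{i\in N}|\overline X_i|$ from \cite[Lemma~3]{RSS:2010} (already invoked to bound $c$), in order to upper-bound $\sum_{i\in S,\,|X_i|<c}(c-|X_i|)$ by $\tilde r_S$.

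The main obstacle I anticipate lies in this last step, because the water level $c$ is determined globally over $N$ while the Slepian--Wolf bounds on $\tilde r_S$ involve only information internal to $S$. The cleanest route I foresee is to take a two-part cut of $S$ that isolates the users above the water level from those below it, apply the corresponding rate-region inequality to $\tilde r_S$, and then use the coalitional identity $\cup_{i\in S}X_i=X$ together with the monotonicity of $\max(0,c-|X_i|)$ in $|X_i|$ to match the resulting lower bound against the water-filling total on $S$. For the boundary case $|S|=2$ the singleton constraints $\tilde r_i\geq |\overline X_{3-i}|$ together with $c\leq k$ already force the inequality; the main work is to push this matching through for larger coalitions. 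Once the structural inequality $\sum_{i\in S}p^{-}_i\leq \tilde r_S$ is in place, it contradicts the displayed inequality above and $(r,p)$-stability follows.
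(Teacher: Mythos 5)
Your opening is exactly the paper's: assume a blocking coalition $S$, sum the utilities, use $\sum_{i\in S}\tilde p^{+}_i=\sum_{i\in S}\tilde p^{-}_i$ and $p^{+}_i=r_i$ to reduce everything to showing $\sum_{i\in S}p^{-}_i\leq\tilde r_S$ for every coalition $S\subset N$ and every $\tilde r\in\mathcal{R}_S$; and your water-filling description of Algorithm~2's payments, $p^{-}_i=\max(0,c-|X_i|)$ with $\sum_{i\in N}\max(0,c-|X_i|)=r_N$, is also how the paper proceeds. The gap is precisely the part you flag as the ``main obstacle'' and leave unresolved. The paper does not close it with the cut-set inequalities defining $\mathcal{R}_S$ together with $r_N\leq\min_i|\overline{X}_i|+\max_i|\overline{X}_i|$; those are not the operative tools. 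Instead, after sorting $|X_1|\leq\dots\leq|X_n|$ and setting $i^{\star}=\min_{i\in S}i$, it introduces the enlarged coalition $S^{\star}=\{i^{\star},\dots,n\}$, \emph{runs Algorithm~2 itself} on $S$ and on $S^{\star}$, and exploits the innovation property of the algorithm's transmissions (each transmitted packet increases the knowledge of every not-yet-omniscient user, inside or outside the transmitting coalition). This yields two facts that are not consequences of the polyhedral description of $\mathcal{R}_S$: (a) $r^{\star}_{S^{\star}}\leq\tilde r_S$, because any schedule achieving omniscience for $S$ automatically completes the users of $S^{\star}\setminus S$; and (b) the run on $S^{\star}$ extends to a scheme for all of $N$ with at most $k-\min\{|X_1|+r^{\star}_{S^{\star}},k\}$ extra transmissions, which forces the dichotomy: either $r^{\star}_{S^{\star}}\geq r_N$, or $r_N=k-|X_1|$.

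Only with this dichotomy does the water-filling computation close. If $r^{\star}_{S^{\star}}\geq r_N$ the claim is immediate from $\sum_{i\in S}p^{-}_i\leq p_N=r_N$. If $r_N=k-|X_1|$, then combining $r^{\star}_{S^{\star}}\geq k-|X_{i^{\star}}|$ with $c\geq|X_i|$ for $i\leq i^{\star}$ gives
\[
\sum_{i\in S}p^{-}_i\leq\sum_{i\in S^{\star}}p^{-}_i=r_N-\sum_{i<i^{\star}}(c-|X_i|)\leq r^{\star}_{S^{\star}}\leq\tilde r_S .
\]
Your plan has no substitute for step (b): an arbitrary $\tilde r\in\mathcal{R}_S$ gives no handle on $r_N$ beyond $r_N\leq 2k-|X_1|-|X_n|$, and $r_N>k-|X_1|$ genuinely occurs (e.g.\ $X_1=\{x_1\}$, $X_2=\{x_1,x_2\}$, $X_3=\{x_3,x_4\}$ has $r_N=4>3$), so you cannot restrict attention to the regime where the ``matching'' you describe might go through, and you give no argument for the other regime. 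The note about $|S|=2$ and the singleton constraints does not generalize, as you acknowledge. In short: the reduction and the payment structure are right, but the central inequality $\sum_{i\in S}p^{-}_i\leq\tilde r_S$ is asserted with a proof strategy that is both incomplete and, as far as the listed ingredients go, insufficient; the missing idea is the auxiliary coalition $S^{\star}$ and the use of the algorithm's own transmission schedules (not just the rate region) to compare $r_N$, $r^{\star}_{S^{\star}}$, and $\tilde r_S$.
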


\begin{proof}
Take an arbitrary $S$ such that $\mathcal{R}_S\neq \emptyset$ (i.e., all users in $S$ can achieve omniscience together). By the same argument as in the proof of Lemma~\ref{lem:feasibility1}, it suffices to show that \[\sum_{i\in S} p^{-}_i\leq \tilde{r}_S\] for all $\tilde{r}\in\mathcal{R}_S$. Run Algorithm~2 over the set $S$, and denote by $(\tilde{r},\tilde{p})$ the output. Let $\tilde{Y}=\{\tilde{y}_1,\dots,\tilde{y}_{\tilde{r}_S}\}$ and $\tilde{V}=\{\tilde{v}_1,\dots,\tilde{v}_{\tilde{r}_S}\}$ be the set of all packets being transmitted from the round $1$ to the round $\tilde{r}_S$ and their encoding vectors, respectively. Note that $\tilde{r}_S$ is the minimum sum-rate that all users in $S$ can achieve omniscience (by Lemma~\ref{lem:sumratemin}). Assume, without loss of generality, that $|X_1|\leq |X_2|\leq \dots\leq |X_n|$. Define $i^{\star} \triangleq \min_{i\in S} i$, and $S^{\star}\triangleq \{i^{\star},\dots,n\}$. Since $S\subseteq S^{\star}$, then $\mathcal{R}_{S^{\star}}\neq \emptyset$ (i.e., all users in $S^{\star}$ can achieve omniscience together). Moreover, run Algorithm~2 over the set $S^{\star}$, and denote by $(r^{\star},p^{\star})$ the output. Note that $p^{\star}_{S^{\star}} = r^{\star}_{S^{\star}}$ (by the result of Lemma~\ref{lem:rationalityAlgo2}). Let $Y^{\star}=\{y^{\star}_1,\dots,y^{\star}_{r^{\star}_{S^{\star}}}\}$ and $V^{\star}=\{v^{\star}_1,\dots,v^{\star}_{r^{\star}_{S^{\star}}}\}$ be the set of all packets being transmitted from the round $1$ to the round $r^{\star}_{S^{\star}}$ and their encoding vectors, respectively.  

First, we show that \[r^{\star}_{S^{\star}}\leq \tilde{r}_S.\] To do so, it suffices to show that all users in $S^{\star}\setminus S$ achieve omniscience after the reception of all packets in $\tilde{Y}$. The proof follows by contradiction. Consider an arbitrary user $i\in S^{\star}\setminus S$. Suppose that the user $i$ does not achieve omniscience after the reception of all packets in $\tilde{Y}$, i.e., $\dim(\mathrm{U}_i \cup \tilde{V})<k$. Since $\dim(\mathrm{U}_i) \geq \dim(\mathrm{U}_{i^{\star}})$ and $\dim(\mathrm{U}_{i^{\star}})\geq k-\tilde{r}_S$, then $\dim(\mathrm{U}_i)\geq k-\tilde{r}_S$. Thus, there exists some round $l$ such that the encoding vector $\tilde{v}_l$ of the packet $\tilde{y}_l$ being transmitted by some user $t\in S$ is in the knowledge set of the user $i$ prior to the round $l$, i.e., \[\mathrm{span}(\mathrm{U}_{t}) \subseteq \mathrm{span}(\mathrm{U}_i\cup \{\tilde{v}_{1},\dots,\tilde{v}_{l-1}\}),\] and consequently, \[\mathrm{span}(\mathrm{U}_{t}\cup \{\tilde{v}_{1},\dots,\tilde{v}_{l-1}\}) \subseteq \mathrm{span}(\mathrm{U}_i\cup \{\tilde{v}_{1},\dots,\tilde{v}_{l-1}\}).\] Thus, \[\mathrm{span}(\mathrm{U}_{t}\cup \tilde{V}) \subseteq \mathrm{span}(\mathrm{U}_i\cup \tilde{V}).\] Since $\dim(\mathrm{U}_{t}\cup \tilde{V})=k$ and $\dim(\mathrm{U}_{i}\cup \tilde{V})\geq \dim(\mathrm{U}_{t}\cup \tilde{V})$, then $\dim(\mathrm{U}_{i}\cup \tilde{V})=k$. This is, however, a contradiction since $\dim(\mathrm{U}_{i}\cup \tilde{V})<k$ (by assumption). Thus, all users in $S^{\star}\setminus S$ achieve omniscience after the reception of all packets in $\tilde{Y}$, and so, $r^{\star}_{S^{\star}}\leq \tilde{r}_S$.

Next, we show that \[\sum_{i\in S^{\star}} p^{-}_i \leq p^{\star}_{S^{\star}}.\] If $S^{\star} = N$, then \[\sum_{i\in S^{\star}} p^{-}_i  = p_N = r_N = r^{\star}_N= p^{\star}_N = p^{\star}_{S^{\star}}.\] Now assume that $S^{\star}\neq N$. If for some $l$, the packet $y^{\star}_l$ being transmitted by the user $t\in S^{\star}$ does not increase the size of knowledge of the user $i\in N\setminus S^{\star}$ such that $\dim(\mathrm{U}_{i}\cup V^{\star})<k$, then \[\mathrm{span}(\mathrm{U}_t)\subseteq \mathrm{span}(\mathrm{U}_i\cup \{v^{\star}_1,\dots,v^{\star}_{l-1}\}),\] and consequently, \[\mathrm{span}(\mathrm{U}_t\cup \{v^{\star}_1,\dots,v^{\star}_{l-1}\})\subseteq \mathrm{span}(\mathrm{U}_i\cup \{v^{\star}_1,\dots,v^{\star}_{l-1}\}).\] Thus, \[\mathrm{span}(\mathrm{U}_t\cup V^{\star})\subseteq \mathrm{span}(\mathrm{U}_i\cup V^{\star}).\] Since $\dim(\mathrm{U}_{t}\cup V^{\star})=k$ and $\dim(\mathrm{U}_{i}\cup V^{\star})\geq \dim(\mathrm{U}_{t}\cup V^{\star})$, then $\dim(\mathrm{U}_{i}\cup V^{\star})=k$. This yields a contradiction since $\dim(\mathrm{U}_{i}\cup V^{\star})<k$ (by assumption). Thus, the packet $y^{\star}_l$ (for any $l$) increases the size of knowledge of all users in $N\setminus S^{\star}$ that do not achieve omniscience after the reception of all packets $y^{\star}_1,\dots,y^{\star}_{r^{\star}_{S^{\star}}}$. 

Since the size of knowledge of each user $i\in N\setminus S^{\star}$ after the reception of all packets in $Y^{\star}$ is $\min\{|X_i|+r^{\star}_{S^{\star}},k\}$, then the user $i$ needs $k-\min\{|X_i|+r^{\star}_{S^{\star}},k\}$ ($\leq k-\min\{|X_1|+r^{\star}_{S^{\star}},k\}$) more packets to achieve omniscience. Thus, if the users in $S^{\star}$ continue to make transmissions after they all achieve omniscience, all users in $N\setminus S^{\star}$ achieve omniscience after the reception of at most $k-\min\{|X_1|+r^{\star}_{S^{\star}},k\}$ more packets. Thus, all users in $N$ achieve omniscience with at most $r^{\star}_{S^{\star}}+k-\min\{|X_1|+r^{\star}_{S^{\star}},k\}$ total transmissions. Since $r_N$ is the minimum sum-rate for all users in $N$ to achieve omniscience, then \[r^{\star}_{S^{\star}}+k-\min\{|X_1|+r^{\star}_{S^{\star}},k\}\geq r_N.\] We consider two cases: (i) $|X_1|+r^{\star}_{S^{\star}}\geq k$, and (ii) $|X_1|+r^{\star}_{S^{\star}}< k$. 

In the case~(i), we have \[r^{\star}_{S^{\star}}\geq r_N=p_N\geq \sum_{i\in S^{\star}} p^{-}_i.\] In the case~(ii), we have \[r^{\star}_{S^{\star}}+k-|X_1|-r^{\star}_{S^{\star}} = k-|X_1|\geq r_N.\] Since $r_N\geq k-|X_1|$ (otherwise, the user $1$ cannot achieve omniscience), then $r_N = k-|X_1|$. Let $c \triangleq \min_{i\in N} (|X_i|+p^{-}_i)$. If $c<|X_{i^{\star}}|$, then \[\sum_{i\in S^{\star}} p^{-}_i\leq p^{\star}_{S^{\star}}\] since $\sum_{i\in S^{\star}} p^{-}_i=0$. Now, assume that $c\geq |X_{i^{\star}}|$. Recall that $|X_1|\leq |X_2|\leq \dots\leq |X_{i^{\star}}|\leq \dots\leq |X_n|$ (by assumption). Thus, $c\geq |X_i|$ for all $i\in N\setminus S^{\star}$. Note that \[\sum_{i\in S^{\star}} p^{-}_i = r_N-\sum_{i\in N\setminus S^{\star}} (c-|X_i|)\] and \[p^{\star}_{S^{\star}}=r^{\star}_{S^{\star}}.\] We need to show that \[\sum_{i\in S^{\star}} p^{-}_i\leq p^{\star}_{S^{\star}}.\] Thus it suffices to show that \[r_N-\sum_{i\in N\setminus S^{\star}} (c-|X_i|)\leq r^{\star}_{S^{\star}}.\] The proof follows by contradiction. Suppose that \[r_N-\sum_{i\in N\setminus S^{\star}} (c-|X_i|)> r^{\star}_{S^{\star}}.\] Since $r_N = k-|X_1|$ and $r^{\star}_{S^{\star}}\geq k-|X_{i^{\star}}|$ (otherwise, the user $i^{\star}$ cannot achieve omniscience), then \[k-|X_1|-\sum_{i\in N\setminus S^{\star}} (c-|X_i|)> r^{\star}_{S^{\star}}\geq k-|X_{i^{\star}}|,\] and consequently, \[|X_{i^{\star}}|>|X_1|+\sum_{i\in N\setminus S^{\star}} (c-|X_i|).\] Since \[\sum_{i\in N\setminus S^{\star}} (c-|X_i|) = (i^{\star}-1)c-(|X_1|+\dots+|X_{i^{\star}-1}|),\] then \[|X_2|+\dots+|X_{i^{\star}}|>(i^{\star}-1)c.\] This is, however, a contradiction since $c\geq |X_i|$ for all $i\in [i^{\star}]$ (by assumption), and so, \[(i^{\star}-1)c\geq |X_2|+\dots+|X_{i^{\star}}|.\] Thus, \[\sum_{i\in S^{\star}} p^{-}_i \leq p^{\star}_{S^{\star}}.\] Moreover, \[\sum_{i\in S} p^{-}_i  \leq \sum_{i\in S^{\star}} p^{-}_i\] since $S\subseteq S^{\star}$ (by definition). By combining the above arguments, it then follows that \[\sum_{i\in S} p^{-}_i \leq \tilde{r}_S,\] as was to be shown.
\end{proof}

\begin{lemma}\label{lem:feasibilityoptimality2}
$(r,p)$ is optimal. 
\end{lemma}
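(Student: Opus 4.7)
The plan is to mirror the short argument used in the proof of Lemma~\ref{lem:optimality1}, since the two key ingredients needed there carry over essentially verbatim to the Algorithm~2 setting. I would fix an arbitrary feasible $(\tilde{r},\tilde{p})$ with $\tilde{r}\in\mathcal{R}_N$ and $\tilde{p}\in\mathcal{P}_N$, and prove $\tilde{r}_N\geq r_N$ and $\tilde{p}_N\geq p_N$ separately; the claim of the lemma is then immediate.

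For the rate inequality, I would simply invoke Lemma~\ref{lem:sumratemin}, which applies to any element of $\mathcal{R}_N$ and gives $\tilde{r}_N\geq r_N$ directly. For the payment inequality, I would exploit the fact that feasibility of $(\tilde{r},\tilde{p})$ entails rationality, so that $\tilde{p}^{+}_i\geq \tilde{r}_i$ for all $i\in N$. Summing over $N$ and using $\tilde{p}_N=\sum_{i\in N}\tilde{p}^{+}_i$ (valid for any $\tilde{p}\in\mathcal{P}_N$), I obtain $\tilde{p}_N\geq \tilde{r}_N$. On the other side, the construction of Algorithm~2 increments $p^{+}_t$ by one whenever it increments $r_t$ by one, so $p^{+}_i=r_i$ for every $i\in N$, as already recorded in the proof of Lemma~\ref{lem:rationalityAlgo2}. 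Summing yields $p_N=\sum_{i\in N}p^{+}_i=r_N$. Chaining these three facts gives the desired inequality $\tilde{p}_N\geq \tilde{r}_N\geq r_N=p_N$, which rules out either $\tilde{r}_N<r_N$ or $\tilde{p}_N<p_N$, completing the proof.

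The only subtlety, and the single place where care is needed, is notational: Algorithm~2 returns payments in the broker-based form $(p^{+},p^{-})$, whereas feasibility is phrased in terms of a matrix $p\in\mathcal{P}_N$. I would briefly observe that any broker-based assignment can be realized as a matrix $p\in\mathcal{P}_N$ with the same incoming and outgoing totals, so the identities $p_N=\sum_{i\in N}p^{+}_i=\sum_{i\in N}p^{-}_i$ hold in both settings and the argument of Lemma~\ref{lem:optimality1} transfers without modification. Beyond this bookkeeping, there is no real obstacle; all the nontrivial work (the sum-rate lower bound and the bound on $\sum_{i\in S} p^{-}_i$) has already been done in Lemmas~\ref{lem:sumratemin} and~\ref{lem:rationalityAlgo2} and in the preceding feasibility lemma for Algorithm~2.
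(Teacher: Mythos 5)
Your proposal is correct and follows exactly the route the paper intends: the paper's own proof of this lemma simply states that it follows from the same argument as Lemma~\ref{lem:optimality1}, which is precisely what you have spelled out (Lemma~\ref{lem:sumratemin} for $\tilde{r}_N\geq r_N$, rationality for $\tilde{p}_N\geq\tilde{r}_N$, and $p^{+}_i=r_i$ from the algorithm for $p_N=r_N$). The brief remark about realizing the broker-based payments as a matrix in $\mathcal{P}_N$ is a reasonable piece of bookkeeping that the paper leaves implicit.
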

	
\begin{proof}
The proof follows from the same argument as in the proof of Lemma~\ref{lem:optimality1}, and hence omitted to avoid repetition. 	
\end{proof}

\begin{lemma}
For any optimal $(\tilde{r},\tilde{p})$, we have $\sum_{i\in N} u_i(r,p)= \sum_{i\in N} u_i(\tilde{r},\tilde{p})$ and $\min_{i\in N} u_i(r,p)\geq \min_{i\in N} u_i(\tilde{r},\tilde{p})$.	
\end{lemma}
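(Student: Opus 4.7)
\noindent\emph{Proof plan.} The argument naturally splits into a sum-utility claim and a min-utility claim, of which the second is much harder.

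For the sum-utility claim I would first observe that any optimal $(\tilde r,\tilde p)$ must have $\tilde r_N=r_N$ (combining Lemma~\ref{lem:sumratemin} and the minimality guaranteed by optimality in both directions) and $\tilde p_N=r_N$ (because $p_N=r_N$ by Lemma~\ref{lem:feasibilityoptimality2} while rationality yields $\tilde p_N\ge \tilde r_N=r_N$). Since the non-negative numbers $\tilde p_i^{+}-\tilde r_i$ sum to $\tilde p_N-\tilde r_N=0$, each must vanish, so $\tilde p_i^{+}=\tilde r_i$ for all $i$ and therefore $u_i(\tilde r,\tilde p)=|\overline X_i|-\tilde p_i^{-}$. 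Summing gives $\sum_{i\in N}u_i(\tilde r,\tilde p)=\sum_{i\in N}|\overline X_i|-\tilde p_N=\sum_{i\in N}|\overline X_i|-r_N$, which is the same quantity for $(r,p)$ and for every optimal $(\tilde r,\tilde p)$.

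For the min-utility claim I would argue by contradiction. Suppose some optimal $(\tilde r,\tilde p)$ satisfies $\min_i u_i(\tilde r,\tilde p)>\mu:=\min_i u_i(r,p)$, and set $J:=\{i\in N: u_i(r,p)=\mu\}$. By the first part, $u_i=|\overline X_i|-p_i^{-}$ under both solutions, so for every $i\in J$ we have $\tilde p_i^{-}<p_i^{-}$. Budget conservation $\tilde p_N=p_N=r_N$ then yields the strict surplus $\sum_{i\notin J}\tilde p_i^{-}>\sum_{i\notin J}p_i^{-}$. The target is to exhibit a coalition $S\subseteq N\setminus J$ on which $\sum_{i\in S}p_i^{-}$ is already tight against the stability bound $\tilde r^{\star}_S:=\min_{r'\in\mathcal R_S}r'_S$; the strict surplus then drives $\sum_{i\in S}\tilde p_i^{-}>\tilde r^{\star}_S$, contradicting the feasibility of $(\tilde r,\tilde p)$ on $S$ (the very inequality established in the proof of Lemma~\ref{lem:feasibilityoptimality2}).

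The main obstacle is establishing this tightness together with the fact that $N\setminus J$, or a suitable sub-coalition, does cover $X$. I would build on the water-filling invariant already extracted in the proof of Lemma~\ref{lem:rationalityAlgo2}: at every round $l$, every user $i$ with $p_i^{-}(l)>0$ shares the common level $|X_i|+p_i^{-}(l)=\min_j\bigl(|X_j|+p_j^{-}(l)\bigr)$, and the set $P_l$ is the argmin of this quantity. Tracing this invariant to termination shows that $J$ is exactly the set of users sitting at the final minimum level and that any $i\notin J$ had its level frozen (no further payments) after some round in which its level rose strictly above the current minimum. Combining this with the block decomposition $N_1,\dots,N_m$ of users achieving omniscience simultaneously (as in the proof of Theorem~\ref{thm:Solution}), I would identify $N\setminus J$ with a union of the earliest such blocks $N^{(s)}=N_1\cup\cdots\cup N_s$, which is automatically a coalition since each $N^{(s)}$ is, and whose total Algorithm~2 payment equals the number of rounds required for $N^{(s)}$ to reach omniscience—namely $\tilde r^{\star}_{N^{(s)}}$ by Lemma~\ref{lem:key1}.

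Finally, writing out the chain $\sum_{i\in N\setminus J}\tilde p_i^{-}>\sum_{i\in N\setminus J}p_i^{-}=\tilde r^{\star}_{N\setminus J}$ completes the contradiction and yields $\min_i u_i(r,p)\ge \min_i u_i(\tilde r,\tilde p)$. If in some degenerate instance $N\setminus J$ fails to cover $X$, I would fall back on picking a specific $i^{\star}\in N\setminus J$ with $\tilde p_{i^{\star}}^{-}>p_{i^{\star}}^{-}$ and pass to a minimal coalition $S\subseteq N\setminus J$ containing $i^{\star}$ and saturating the stability constraint for Algorithm~2, so that the same contradiction is reached.
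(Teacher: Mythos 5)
Your sum-utility argument is correct and is essentially the paper's: every optimal $(\tilde r,\tilde p)$ has sum-utility $\sum_{i\in N}|\overline{X}_i|-\tilde r_N=\sum_{i\in N}|\overline{X}_i|-r_N$. The min-utility half, however, has genuine gaps. First, your strict surplus $\sum_{i\notin J}\tilde p_i^{-}>\sum_{i\notin J}p_i^{-}$ lives on all of $N\setminus J$ and does not localize to a sub-coalition $S\subseteq N\setminus J$, so the planned contradiction on $S$ does not follow unless $S=N\setminus J$, and $N\setminus J$ need not be a coalition. Second, the structural claims are wrong for Algorithm~2: the paying set $P_l$ is the argmin of the water level $|X_i|+p_i^{-}$ and is decoupled from omniscience status (a user keeps paying after it becomes omniscient), so $N\setminus J$ is not a union of the blocks $N^{(s)}$, and the asserted identity $\sum_{i\in N\setminus J}p_i^{-}=\min_{r'\in\mathcal R_{N\setminus J}}r'_{N\setminus J}$ has no support (Lemma~\ref{lem:key1} concerns Algorithm~1's rounds, and the users in $N\setminus J$ are the \emph{least} knowledgeable ones, not the earliest blocks). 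Third, you invoke ``feasibility of $(\tilde r,\tilde p)$ implies $\sum_{i\in S}\tilde p_i^{-}\le\min_{r'\in\mathcal R_S}r'_S$,'' which is the converse of what the feasibility lemmas prove (they show this bound is \emph{sufficient} for stability of the algorithm's own output); the converse would require constructing a rational Pareto-improving deviation and is nowhere established.

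The irony is that the water-filling invariant you cite already finishes the proof with no coalition argument, and this is what the paper does. With $c=\min_i(|X_i|+p_i^{-})$ one gets $u_i(r,p)=k-\max\{c,|X_i|\}$. If $c<|X_n|$, then $\min_i u_i(r,p)=|\overline{X}_n|$ and $J$ consists of users with $p_i^{-}=0$, so your own requirement $\tilde p_i^{-}<p_i^{-}=0$ is already absurd; equivalently, no rational solution can give user $n$ more than $|\overline{X}_n|$. If $c\ge|X_n|$, then $J=N$ and all $n$ utilities equal $k-c$, whose sum is the (fixed, maximal) sum-utility, so no optimal solution can have every utility strictly above $k-c$; equivalently, your budget identity $\tilde p_N=p_N$ is contradicted outright. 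Had you computed $J$ explicitly from the invariant before building the deviation machinery, the argument would have collapsed to these two easy cases.
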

	
\begin{proof}
The proof of the first part (i.e., maximum sum-utility) is straightforward. Take an arbitrary optimal $(\tilde{r},\tilde{p})$. Since $\tilde{r}_N = r_N$ and $p,\tilde{p}\in \mathcal{P}_N$, then \[\sum_{i\in N} u_i(\tilde{r},\tilde{p})=\sum_{i\in N} |\overline{X}_i|-\tilde{r}_N= \sum_{i\in N} |\overline{X}_i|-r_N = \sum_{i\in N} u_i(r,p).\] For the proof of the second part (i.e., maximum min-utility), we need to show that \[\min_{i\in N} u_i(\tilde{r},\tilde{p})\leq \min_{i\in N} u_i(r,p)\] for any optimal $(\tilde{r},\tilde{p})$. Take an arbitrary optimal $(\tilde{r},\tilde{p})$. Since $\tilde{p}^{+}_i = \tilde{r}_i$ (otherwise, $\tilde{p}_N>r_N=p_N$ since $\tilde{p}^{+}_i\geq \tilde{r}_i$ (by rationality of $(\tilde{r},\tilde{p})$), and so, $(\tilde{r},\tilde{p})$ cannot be optimal), then $u_i(\tilde{r},\tilde{p}) = |\overline{X}_i|-\tilde{p}^{-}_i$. Note that $\tilde{p}_N=p_N$. Let $c \triangleq \min_{i\in N} (|X_i|+p^{-}_i)$. Note that $|\overline{X}_i|-p^{-}_i=k-c$ if $c\geq |X_i|$, and $|\overline{X}_i|-p^{-}_i=|\overline{X}_i|=k-|X_i|$ if $c< |X_i|$. Thus, $u_i(r,p)=k-\max\{c,|X_i|\}$ for all $i\in N$. Since $|X_n|\geq |X_i|$ for all $i\in N$ (by assumption), then it follows that \[\min_{i\in N} u_i(r,p)=k-\max\{c,|X_n|\}.\] We consider two cases: (i) $c<|X_n|$, and (ii) $c\geq |X_n|$. 

In the case (i), $\min_{i\in N} u_i(r,p)=k-|X_n|=|\overline{X}_n|$. If $\tilde{p}^{-}_n>0$, then \[u_n(\tilde{r},\tilde{p})=|\overline{X}_n|-\tilde{p}^{-}_n<|\overline{X}_n|=\min_{i\in N} u_i(r,p).\] If $\tilde{p}^{-}_n=0$, then $u_n(\tilde{r},\tilde{p})=|\overline{X}_n|$, and consequently, \[\min_{i\in N} u_i(\tilde{r},\tilde{p})\leq u_n(\tilde{r},\tilde{p})=\min_{i\in N} u_i(r,p)=k-c.\] In the case (ii), $\min_{i\in N} u_i(r,p)=k-c$. Suppose that $\min_{i\in N} u_i(\tilde{r},\tilde{p})>\min_{i\in N} u_i(r,p)$. Let $j\in N$ be such that $|\overline{X}_{j}|-\tilde{p}^{-}_{j}=\min_{i\in N} u_i(\tilde{r},\tilde{p})$. Thus, $|\overline{X}_{j}|-\tilde{p}^{-}_{j}>k-c$. Since $|\overline{X}_i|-\tilde{p}^{-}_i\geq |\overline{X}_{j}|-\tilde{p}^{-}_{j}$ for all $i\in N$, then $|\overline{X}_i|-\tilde{p}^{-}_i>k-c$. Thus, 
\begin{eqnarray*}
\sum_{i\in N} |\overline{X}_i|-\sum_{i\in N} \tilde{p}^{-}_i &=& \sum_{i\in N} |\overline{X}_i|-\tilde{p}_N\\ &=& \sum_{i\in N} |\overline{X}_i|-\tilde{r}_N\\ &=& \sum_{i\in N} |\overline{X}_i|-r_N\\ &=& nk-\sum_{i\in N} |X_i|-r_N\\ &>& nk-nc,	
\end{eqnarray*}
or equivalently, $(\sum_{i\in N} |X_i|+r_N)/n<c$. Since $c=\min_{i\in N} (|X_i|+p^{-}_i)$ (by definition) and $c\geq |X_n|$ (by assumption), then it is easy to see that $c=(\sum_{i\in N} |X_i|+r_N)/n$. This is a contradiction since $(\sum_{i\in N} |X_i|+r_N)/n<c$. Thus, $\min_{i\in N} u_i(\tilde{r},\tilde{p})\leq\min_{i\in N} u_i(r,p)$. This completes the proof.	
\end{proof}

\bibliographystyle{IEEEtran}
\bibliography{CDERefs}

\end{document}